\def\qed{\unskip\nobreak\hfill$\Box$\par\addvspace{\medskipamount}}
\def\b{{\beta}}
\def\l{{\lambda}}
\def\s{{\sigma}}
\def\p{{\rho}} 
\def\td{\mathrm{d}}
\newcommand{\be}{\begin{equation}}
\newcommand{\ee}{\end{equation}}
\newcommand{\bea}{\begin{eqnarray}}
\newcommand{\eea}{\end{eqnarray}}
\newcommand{\beas}{\begin{eqnarray*}}
\newcommand{\eeas}{\end{eqnarray*}}
\newtheorem{theorem}{Theorem}[section]
\newtheorem{definition}[theorem]{Definition}
\newtheorem{proposition}[theorem]{Proposition}
\newtheorem{corollary}[theorem]{Corollary}
\newtheorem{lemma}[theorem]{Lemma}
\newtheorem{remark}[theorem]{Remark}
\newtheorem{example}[theorem]{Example}
\newtheorem{examples}[theorem]{Examples}
\newtheorem{foo}[theorem]{Remarks}
\newenvironment{Example}{\begin{example}\rm}{\end{example}}
\newenvironment{Remark}{\begin{remark}\rm}{\end{remark}}
\newenvironment{proof}{\addvspace{\medskipamount}\par\noindent{\it Proof}.}
{\unskip\nobreak\hfill$\Box$\par\addvspace{\medskipamount}}
\newcommand{\IR}{\int_{\mathbb{R}^k\setminus\{0\}}}
\newcommand{\E}[1]{{\mathbb{E}}\left[#1\right]}
\DeclareMathOperator{\essinf}{ess\,inf}
\DeclareMathOperator{\argmin}{arg\,min}
\def\F{\mathcal{F}}
\def\R{\mathbb{R}}
\begin{document}

\title{Representation Results for Law Invariant Recursive Dynamic Deviation Measures and Risk Sharing}

\author{Mitja Stadje
	\footnote{{Faculty of Mathematics and Economics,
	Ulm University}, mitja.stadje@uni-ulm.de
\newline
{\em Keywords and phrases.} Deviation measure, time-consistency, law invariance, risk-sharing.
\newline
{\em (2010) AMS Classification.} 60H30, 91B30, 91B06, 91B70, 46N10.
}}

\maketitle
\begin{abstract}
In this paper we analyze a dynamic recursive extension of the (static) notion of a deviation measure and its properties. We study distribution invariant deviation measures and show that the only dynamic deviation measure which is law invariant and recursive is the variance. We also solve the problem of optimal risk-sharing generalizing classical risk-sharing results for variance through a dynamic inf-convolution problem involving a transformation of the original dynamic deviation measures.


\end{abstract}

\section{Introduction}

The traditional way of thinking about risk, playing a crucial role in most 
fields involved with probabilities, is to measure risk as the deviation of the 
random outcomes from the longtime average, i.e., to measure risk for instance 
as the variance or the standard deviation involved.
This is in particular the case for portfolio choice theory where almost the 
complete standard finance literature simply describes portfolio selection as 
the choice between return (mean) and risk (variance). For stock prices in a 
continuous-time setting risk is also often identified 
with volatility, i.e., as the local standard deviation on an incremental time 
unit. \\

However, variance penalizes positive deviations from the mean in the same way as negative deviations, which in many contexts is not suitable. Furthermore, computing the variance or the standard deviation is mainly justified by its nice analytical, computational and statistical properties but is a ad-hoc procedure and it is not clear if not better methods could be used. To overcome these shortfalls, Rockafellar {\emph et al.}~(2002) developed a general axiomatic framework for static deviation measures; see also Rockafellar {\emph et al.}~(2006a,2006b,2006c,2007,2008), M\"arket and Schultz~(2005), Sarykalin {\emph et al.}~(2008), Grechuk {\emph et al.}~(2009), Righi and Ceretta~(2016) or Righi~(2017).

This work was inspired by the axiomatic construction of coherent and convex 
risk measures given in Artzner {\emph et al.}~(1999,2000), F\"ollmer and Schied~(2002) 
and Frittelli and Rosazza Gianin~(2002). (Coherent or convex risk measures 
describe minimal capital reserves a financial institution should hold in order 
to be `safe'.) As Artzner {\emph et al.}~(2000) gave an axiomatic 
characterization of 
capital reserves these works give an axiomatic framework for deviation 
measures. \\

This theory of generalized deviation measures can be extended to a dynamic setting using the conditional variance formula, see Pistorius and Stadje~(2017), in the same spirit as convex risk measures have been extended to a dynamic setting using the tower property. For the latter, see for instance Barrieu and El Karoui~(2004,2005,2009), Artzner {\emph et al.}~(2004), Riedel~(2004), Rosazza Gianin~(2006),  Cheridito {\emph et al.}~(2006), Delbaen~(2006), Kl\"oppel and Schweizer~(2007), Jiang~(2008), Delbaen {\emph et al.}~(2008), Bion Nadal~(2009), Pelsser and Stadje~(2014) and Elliott et al. (2015). \\

In the first part of the paper we study distribution invariant deviation measures. For distribution invariant convex risk measures, Kupper and Schachermayer~(2009) showed, building on results of Gerber~(1974), that the so called entropic risk measure is the only convex risk measure satisfying the tower property, see also Goovaerts and Vylder~(1979) and Kaluszka and Krzeszowiec~(2013). The entropic risk measure arises as the negative certainty equivalent of a decision maker with an exponential utility function, see for instance F{\"o}llmer and  Schied~(2002,2004). Although there is an abundance of \emph{static} distribution invariant deviation measures we show that the only dynamic deviation measure which is law invariant and recursive is the variance. Interestingly, it is known also for other contexts that there is a close relationship between the variance and the entropic risk measure (or equivalently the use of an exponential utility function). For instance it is well known in the economics literature that the mean-variance principle can be seen as a second order Taylor approximation to the entropic risk measure. Furthermore, both induce preferences  which are invariant  under shifts of wealth and lead to the same optimal portfolios under normality assumptions, see for example 
Cochrane~(2009) for an overview. Moreover, it is shown for instance in Pelsser and Stadje~(2014) that in a Brownian filtration applying mean-variance  recursively  over an infinitesimal small time interval is equivalent to applying the entropic risk measure recursively  over an infinitesimal small time interval. This paper adds to these results showing that the entropic risk measure and the variance are the only distribution invariant risk measures, which naturally extend to continuous time under dynamic consistency conditions. \\

Subsequently, we then, after discussing some examples, analyze risk-sharing of payoffs between two agents for distribution invariant and non-distribution invariant dynamic deviation measures, generalizing classical risk-sharing results for variance. For coherent, convex and more general risk measures, static and dynamic risk-sharing was studied in Jouini et al. (2008), Acciaio (2007), Barrieu and El Karoui~(2004,2005,2009), Filipovic and Svindland (2008), Carlier et al. (2012), Heath and Ku (2004),  Tsanakas (2009), Dana and Le Van (2010), Mastrogiacomo and Rosazza Gianin (2015), Weber (2017) and Embrechts, et al. (2018). These works go back at least to the seminal work by Borch (1962).  
We solve the risk sharing problem through a dynamic inf-convolution problem involving a transformation of the original driver functions of the dynamic deviation measures involved. \\

This paper is structured as follows. Section \ref{Section Setting} introduces the setting and the basic concepts and definitions. Section \ref{Section Distribution Invariant Deviation Measures} analyzes distribution invariant dynamic deviation measures, while section \ref{Section Examples and Applications} solves the optimal risk-sharing problem.

\section{Setting} \label{Section Setting}

Formally, we consider from now on a filtered, completed, right-continuous 
probability space
$(\Omega,\F,(\F_{t})_{t\in[0,T]},P)$.
Throughout the text, equalities and inequalities between random variables are
meant to hold $P$-almost surely (a.s.); two random variables are identified if
they are equal $P$-a.s. For $t\in[0,T]$, we define $L^{2}(\F_t)$ as the space 
of $\F_t$-measurable random  variables $X$ such that $\E{X^2}< \infty.$ 
\\


\noindent{\bf (Conditional) deviation measures.} Dynamic deviation measures are given in terms of 
conditional deviation measures, which are in turn a conditional version of the notion of (static) deviation measure as in Rockafellar {\em et al.}~(2006a) that we describe next. On a filtered probability space $(\Omega,\F,(\F_{t})_{t\in [0,T]},\mathbb P)$, where $T>0$ denotes the horizon, consider the (risky) positions described by elements in $L^{2}(\F_t)$, $t\in[0,T]$, the space of $\F_t$-measurable random variables $X$ such that $\E{|X|^2}< \infty$). By $L^{2}_{+}(\F_t)$, $L^{\infty}(\F_t)$ and $L^{\infty}_+(\F_t)$ are denoted the subsets of non-negative, bounded and non-negative bounded elements in $L^{2}(\F_t)$.
\begin{definition}
For any given $t\in [0,T]$, $D_t:L^2(\F_T)\to L^2_+(\F_t)$ is called 
an {\em $\mathcal F_t$-conditional  generalized deviation measure}
if it is {\em normalized} ($D_t(0) = 0$)
and the following
properties are satisfied:
\begin{itemize}
\item[{\rm(D1)}]
{\em Translation Invariance}: $D_t(X + m) = D_t(X)$ for any  $m\in
L^\infty(\F_t)$;
\item[{\rm(D2)}] {\em Positivity}: $D_t(X) \geq 0$ for any $X\in L^2(\mathcal F_T),$ and $D_t(X) = 0$ if and only if $X$ is $\F_t$-measurable;
\item[{\rm(D3')}] {\em Subadditivity}: $D_t(X + Y) \leq D_t(X) + D_t(Y)$ for any $X, Y \in L^2(\mathcal F_T)$;
\item[{\rm(D4')}] {\em Positive Homogeneity}: $D_t(\lambda X) = \lambda D_t(X)$ for any $X \in L^2(\F_T)$ and $\lambda \in L^\infty_+(\F_t)$.
\end{itemize}
\end{definition} 
If $\mathcal F_0$ is trivial, $D_0$ is a deviation measure in the sense of Definition~1 in 
Rockafellar {\em et al.}~(2006a). The value $D_t(X)=0$, we recall, corresponds to the riskless state of no uncertainty, and axiom (D1) and can be interpreted as the requirement that adding to a position $X$ a constant (possibly interpreted as cash)  should change the overall deviation. Furthermore, it follows similarly as in Rockafellar {\em et al.} (2006a) 
that, if $D$ satisfies (D3')--(D4'), 
(D1) holds if and only if $D_t(m)=0$ for any $m\in L^2(\F_t)$.  In other words, constants do not carry any risk.
Moreover, it is well known that if (D4') holds, (D3') is equivalent to
\begin{itemize}
\item[(D3)] Conditional Convexity: 
For any $X,Y\in L^2(\F_T)$ and 
any $\lambda\in L^\infty(\F_t)$ that satisfies $0\leq\lambda\leq 1$ 
$$D_{t}(\lambda X+(1-\lambda)Y)\leq \lambda D_{t}(X)+(1-\lambda)D_{t}(Y). $$
\end{itemize}

 
\begin{definition}
	For any given $t\in [0,T]$, $D_t:L^2(\F_T)\to L^2_+(\F_t)$ is called 
	an {\em $\mathcal F_t$-conditional convex deviation measure}
	if it is {\em normalized} ($D_t(0) = 0$)
	and satisfies (D1)-(D3).
\end{definition}
By postulating convexity in the sequel instead of (D3')-(D4') our dynamic theory will be richer and include more examples.
In the analysis often also a continuity condition is imposed, the conditional version of which is given as follows:
\begin{itemize}
	\item[{\rm (D4)}] Continuity: 
	{\it If $X^n$ converges to $X$ in $L^2(\F_T)$ then
		$  D_t(X) = \lim_{n}D_t(X^n)$. }
\end{itemize}

\begin{Remark}
A typical example of a deviation measure satisfying (D1)-(D4) would be to 
identify risk with variance and to define
$$D_t(X):=\text{Var}_t(X)=\E{(X-\E{X|\F_t})^2|\F_t}.$$
\end{Remark}

\begin{Remark}
\label{remaconvexriskmeasure}
As mentioned in the introduction the axiomatic development of the theory of 
deviation measures in Rockafellar {\emph et al.} was inspired by the axiomatic 
development of the theory of convex risk measures. Mappings $\p_t:L^2(\F_T)\to 
L^2_+(\F_t)$ are a family of dynamic convex risk measures if the following 
properties are satisfied:
\begin{itemize}
\item[(R1)] {\it Cash Risklessness}: For all $m\in
L^\infty(\F_t)$ we have $\p_t(m)=-m$.
\item[(R2)] $\F_t$-{\it Convexity}: For $X,Y\in L^2(\F_T)$
$\p_{t}(\l X+(1-\l)Y)\leq \l \p_{t}(X)+(1-\l)\p_{t}(Y)$ for all $\l\in 
L^\infty(\F_t)$ such that $0\leq\l\leq1$.
\item[(R3)] {\it Monotonicity:} If $X,Y\in L^2(\F_T)$ and $X\leq Y$ then 
$\p_t(X)\geq \p_t(Y).$
\item[(R4)] \emph{$L^2$-Continuity:} If $X^n$ converges to $X$ in $L^2$, then 
also $\lim_{n}\p_t(X^n) = \p_t(X)$ for all $t$.
\end{itemize}
Axiom (R1) gives the interpretation of $\p$ as a \emph{capital reserve}. Axiom 
(R2) is interpreted similarly as before. The continuity axiom (R4) is often 
also replaced with a lower semi-continuity axiom. Monotonicity (R3) is an axiom 
which does not make sense for deviation measures since for instance $D_t(m)=0$ 
for all constants $m$.
\end{Remark}

Note that (D2) ($\F_t$-convexity) implies that the following property holds:

\be
\label{localproperty}
D_t(I_A
X_1+I_{A^c}X_2)=I_AD_t(X_1)+I_{A^c}D_t(X_2) \mbox{ for all }
X\in L^2(\F_T)\mbox{ and }A\in \F_t.
\ee

Indeed, if $D_t$ is convex, then we have for $A\in\F_t$ that clearly $D_t(1_A 
H_1 + 1_{A^c} H_2) \le 1_A D_t (H_1) + 1_{A^c}D_t(H_2).$ In particular, $
1_A D_t(1_A H_1 + 1_{A^c} H_2) \le 1_A D_t(H_1) .$ The other direction follows 
by setting $\tilde{H} = 1_A H_1 + 1_{A^c} H_2$. Then as before
$$
1_A D_t(H_1)=1_A D_t(1_A
\tilde{H} + 1_{A^c} H_1) \le 1_A D_t(\tilde{H}) .$$ Switching the role of $H_1$ 
and $H_2$ yields then the desired conclusion. Note that (\ref{localproperty}) 
also implies that 
\be
D_t(I_AX) = D_t(I_AX + 0I_{A^c}) = I_AD_t(X) + I_{A^c}D_t(0) = I_AD_t(X).
\label{local2}
\ee

Now in a theory of risk in a dynamic setting one needs to specify how the evaluation of risk tomorrow affects the evaluation of risk today.
Intuitively it seems appealing to relate the overall deviation to 
an expectation to the fluctuations we expect after tomorrow plus the 
fluctuations happening until tomorrow. To be precise we will postulate that

\begin{itemize}
\item[(D5)] \emph{Recursive Property:} For $X\in L^2(\F_T)$
$D_{t}(X)=D_t(\E{X|\F_s})+\E{D_s(X)|\F_t}$ for all $t,s\in [0,T]$ with $t\leq 
s$.
\end{itemize}

Obviously, the recursive property corresponds to the conditional variance 
formula. 
This axiom was used in Pistorius and Stadje~(2017). The conditional variance formula was recently used for instance in Basak and Chabakauri (2010), Wang and Forsyth (2011), Li et al.(2012) and Czichowsky (2013) in the context of time-consistent dynamic programming principles.

\begin{definition}
	A family $(D_t)_{t\in [0,T]}$ is called a {\em dynamic deviation measure} if 
	$D_t$, $t\in [0,T]$, are {$\mathcal F_t$-conditional  deviation measures} 
	satisfying (D4) and (D5).
\end{definition} 

\begin{Remark}
As $D(X)\geq 0$, (D5) clearly implies that $(D_s(X))$ is a
supermartingale. In particular, it always has a c\`adl\`ag
modification.
\end{Remark}
\begin{Remark}
$D_s$ is completely determined by $D_0.$ This is seen 
as follows:
Suppose that besides $D_s(X)$ there exists square integrable
$\F_s$-measurable random variables $D'_s(X)$ satisfying (D5) for all $X$. Fix 
$X$ and denote
the $\F_s$-measurable set $A'$ by $A':=\{D'_s(X)> D_s(X)\}$.
If we by contradiction assume that $A'$ has measure non-zero
then by (\ref{local2}) and (D3)
$$\E{I_{A'} D_s(X)}=\E{D_s(I_{A'} X)}=D_0(I_{A'} X)-D_0(\E{I_{A'} X|\F_s})
=\E{D'_s(I_{A'} X)}=\E{I_{A'} D'_s(X)},$$
which is a contradiction to the definition of the set $A'$. That the
set $\{D'_s(X)< D_s(X)\}$ must have measure zero as well is seen similarly.
\end{Remark}

The following proposition is also shown in Pistorius and Stadje~(2017) in a more resticted axiomatic setting. The proof is included for convenience of the reader.

\begin{proposition}\label{Propositionadditive} 
	Let $I:=\{t_0, t_1, \ldots, t_n\}\subset[0,T]$ be strictly ordered. 
	$D = (D_t)_{t\in I}$ satisfies (D1)--(D3) and (D5)
	if and only if for some collection $\tilde D = (\tilde D_t)_{t\in I}$ of conditional deviation measures 
	we have 
	\begin{equation}\label{DDH}
	D_t(X) = \E{\left. \sum_{t_i\in I: t_i\ge t} 
		\tilde D_{t_i}\left(\E{X|\mathcal F_{t_{i+1}}} - \E{X|\mathcal F_{t_{i}}}\right)\right|\mathcal F_t}, \quad t\in I,\ 
	X\in L^2(\mathcal F_T).
	\end{equation}
	In particular, a dynamic deviation measure $D$ satisfies \eqref{DDH} with $\tilde D_{t_i} = D_{t_i}$, $t_i\in I$.
\end{proposition}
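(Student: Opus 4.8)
The plan is to prove the equivalence by exploiting the telescoping structure that the recursive axiom (D5) imposes along the grid $I$. Throughout set $t_{n+1}:=T$ and, for $Y\in L^2(\F_T)$, write $\Delta_i(Y):=\E{Y|\F_{t_{i+1}}}-\E{Y|\F_{t_i}}$ for the $i$-th increment of the martingale generated by $Y$, so that the asserted identity \eqref{DDH} reads $D_t(X)=\E{\sum_{t_i\ge t}\tilde D_{t_i}(\Delta_i(X))\,\big|\,\F_t}$. I would treat the two implications separately, and the construction in the forward direction will immediately yield the ``in particular'' clause because it forces $\tilde D_{t_i}=D_{t_i}$.

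For the forward direction, fix $t=t_j\in I$ and apply (D5) to the consecutive pair $t_j\le t_{j+1}$, obtaining $D_{t_j}(X)=D_{t_j}(\E{X|\F_{t_{j+1}}})+\E{D_{t_{j+1}}(X)|\F_{t_j}}$. Since $\E{X|\F_{t_j}}$ is $\F_{t_j}$-measurable, translation invariance turns the first term into $D_{t_j}(\Delta_j(X))$, which is the $i=j$ summand once we set $\tilde D_{t_i}:=D_{t_i}$. Iterating this identity downward through the grid and collapsing the resulting nested conditional expectations with the tower property gives $D_{t_j}(X)=\sum_{i=j}^n\E{\tilde D_{t_i}(\Delta_i(X))|\F_{t_j}}$. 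The base case $i=n$ is not covered by (D5) applied within $I$, but follows directly: $\tilde D_{t_n}(\Delta_n(X))=D_{t_n}(X-\E{X|\F_{t_n}})=D_{t_n}(X)$, using $\E{X|\F_T}=X$ and, once more, translation invariance.

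The step I expect to need the most care is exactly this use of translation invariance, since the shift $\E{X|\F_{t_i}}$ lies in $L^2(\F_{t_i})$ but need not be bounded, whereas (D1) is postulated only for $m\in L^\infty(\F_t)$. I would bridge the gap by localization: on $A_k:=\{\abs{\E{X|\F_{t_i}}}\le k\}\in\F_{t_i}$ the local property \eqref{localproperty} together with (D1) yields $I_{A_k}D_{t_i}(X+\E{X|\F_{t_i}})=I_{A_k}D_{t_i}(X+I_{A_k}\E{X|\F_{t_i}})=I_{A_k}D_{t_i}(X)$, and letting $k\to\infty$ (so that $A_k\uparrow\Omega$) removes the shift. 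Apart from this point the forward direction is pure tower-property bookkeeping.

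For the converse I would take the right-hand side of \eqref{DDH} as the definition of $D_t$ and verify the axioms for $t\in I$, using that $Y\mapsto\Delta_i(Y)$ is linear and that, for $t_i\ge t$, any $\F_t$-measurable factor is $\F_{t_i}$-measurable and hence commutes with the conditional expectations defining $\Delta_i$. Normalization and (D1) are immediate from $\Delta_i(m)=0$ for $\F_t$-measurable $m$; positivity of $D_t$ is inherited from that of the $\tilde D_{t_i}$; and convexity (D3) follows termwise from convexity of each $\tilde D_{t_i}$ after the $\F_t$-measurable weight $\l$ is pulled through the linear, monotone map $\E{\cdot|\F_t}$. For the nontrivial half of (D2), $D_t(X)=0$ forces $\tilde D_{t_i}(\Delta_i(X))=0$ for every $i$, so each $\Delta_i(X)$ is $\F_{t_i}$-measurable by (D2) for $\tilde D_{t_i}$; taking $\E{\cdot|\F_{t_i}}$ then shows $\Delta_i(X)=0$ by the martingale property, whence $\E{X|\F_{t_j}}=\cdots=\E{X|\F_T}=X$ is $\F_t$-measurable. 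Finally (D5) for $t_j\le t_k$ in $I$ reduces to splitting $\sum_{i\ge j}=\sum_{j\le i<k}+\sum_{i\ge k}$: using $\Delta_i(\E{X|\F_{t_k}})=\Delta_i(X)$ for $i<k$ and $=0$ for $i\ge k$, the first block reproduces $D_{t_j}(\E{X|\F_{t_k}})$, while the tower property turns the second block into $\E{D_{t_k}(X)|\F_{t_j}}$.
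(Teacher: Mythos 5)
Your proof is correct and follows essentially the same route as the paper's: telescoping (D5) together with translation invariance for the forward direction, and the index split $\sum_{i\ge j}=\sum_{j\le i<k}+\sum_{i\ge k}$ for the converse. The only substantive difference is that you explicitly patch the gap that (D1) is stated only for bounded shifts (via localization on $\{\abs{\E{X|\F_{t_i}}}\le k\}$ and the local property \eqref{localproperty}) and you verify (D1)--(D3) for the converse, both of which the paper treats as immediate.
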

\begin{proof}{}%
	\ `$\Leftarrow$': We will only show that $D_t$ satisfies (D5), as 
	it is clear that (D1)--(D3) are satisfied. Let $X\in L^2(\mathcal F_T)$ and note that 
	as $\tilde D_t$, $t\in I$, satisfy (D1) and (D5) we have for any $s,t\in I$ with $s>t$ 
	that $D_t(\E{X|\mathcal F_s}) = \sum_{t_i\in I:t\leq t_i< s}\E{\tilde D_{t_i}(\E{X|\mathcal F_{t_{i+1}}})|\F_t}$.
	Thus, we have that $D_t(X)$ is equal to
	$$
	\sum_{t_i\in I:t\leq t_i< s}\E{\left.\tilde D_{t_i}(\E{X|\mathcal F_{t_{i+1}}})\right|\F_t} 
	+ \sum_{t_i\in I:s\leq t_i}\E{\left.\tilde D_{t_i}(\E{X|\mathcal F_{t_{i+1}}})\right|\F_t}
	= D_t(\E{X|\mathcal F_s}) + \E{D_s{(X)}|\mathcal F_t}.
	$$
	`$\Rightarrow$': 
	For $X\in L^2(\mathcal F_T)$ and $t_{i-1}\in I$, $i\ge 1$, 
	we have by (D5) and (D1)
	\begin{align}\nonumber
	D_{t_{i-1}}(X) &= D_{t_{i-1}}(\E{X|\F_{t_{i}}}) + \E{D_{t_{i}}(X)|\F_{t_{i-1}}}\\
	&= D_{t_{i-1}}(\E{X|\F_{t_{i}}} - \E{X|\mathcal F_{t_{i-1}}} ) + \E{D_{t_{i}}(X)|\F_{t_{i-1}}}.
	\label{pdp}
	\end{align}
	An induction argument based on \eqref{pdp} then yields 
	that \eqref{DDH} holds with $\tilde D_t=D_t$, $t\in I$.
\end{proof}
\setcounter{equation}{0}

\section{Distribution Invariant Deviation Measures} \label{Section Distribution Invariant Deviation Measures}

The next result investigates the question what happens if we impose 
additionally to axioms (D1)-(D5) \emph{the property of distribution invariance}. A dynamic deviation measure $D$ is distribution invariant if $D_0(X_1) = D_0(X_2)$ whenever $X_1$ and $X_2$ have the same distribution. Distribution invariance is a property which is often not satisfied in a finance context when it comes to evaluation and risk analysis. The reason is that the value of a payoff may not only depend on the nominal discounted value of the payoff itself but also on the whole state of the economy or the performance of the entire financial market. For instance, in no-arbitrage pricing scenarios are additionally weighted with a (risk neutral) density so that the value of a certain payoff in a certain scenario depends not only on the frequency with 
which the corresponding scenario occurs but also on the state of the whole economy. Also in most asset pricing models in finance, not only the distribution of an asset matters but also its correlation to the whole market portfolio. However, for deviation measures distribution invariance is a convenient property as it enables the agent to focus only on the end-distribution of the payoff (which often is known explicitly or can be simulated through Monte-Carlo methods). There are many distribution invariant static deviation measures but it is a priori not clear if apart from variance there are other dynamic deviation measures belonging to this class. The next theorem shows that this is actually not the case and that variance is the only dynamic distribution invariant deviation measure. 
This result can also serve as justification for using variance as a dynamic deviation measure. Namely, a decision maker who believes in axioms (D1)-(D5) and distribution invariance necessarily has to use variance.
For this results we will assume that the probability space is rich enough to 
support a one-dimensional Brownian motion.
\begin{theorem}
	\label{distributioninvariant}
	A dynamic dynamic deviation measure $D$ is distribution invariant if and only 
	if $D$ is a positive multiple of the conditional variance, i.e., there exists an 
	$\alpha > 0$ such that
	
	$$D_t(X)=\alpha {\rm Var}_t (X) .$$
\end{theorem}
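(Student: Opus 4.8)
The plan is to prove the nontrivial ``only if'' direction; the converse is routine, since for any $\alpha>0$ the map $X\mapsto \alpha\,{\rm Var}_t(X)$ is normalized, satisfies (D1)--(D4), satisfies (D5) because (D5) is exactly the conditional variance formula, and is distribution invariant because ${\rm Var}(X)$ depends only on the law of $X$. So I assume $D$ is a distribution invariant dynamic deviation measure and show $D_t=\alpha\,{\rm Var}_t$ for some $\alpha>0$.

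First I would establish additivity over independent increments: if $X_1\in L^2(\F_s)$ and $X_2\in L^2(\F_T)$ is independent of $\F_s$ with $\E{X_1}=\E{X_2}=0$, then $D_0(X_1+X_2)=D_0(X_1)+D_0(X_2)$. Indeed, by (D5), $D_0(X_1+X_2)=D_0(\E{X_1+X_2|\F_s})+\E{D_s(X_1+X_2)}$; here $\E{X_1+X_2|\F_s}=X_1$, and translation invariance (D1), extended from bounded to $L^2(\F_s)$ shifts by truncating $X_1$ and invoking continuity (D4), gives $D_s(X_1+X_2)=D_s(X_2)$. Applying (D5) to $X_2$ alone yields $\E{D_s(X_2)}=D_0(X_2)$, whence the claimed additivity. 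Iterating shows that $D_0$ is additive along the martingale increments of any $X$ over a time grid, which is the structural heart of the argument.

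Next I would pin down $D_0$ on Gaussians using the Brownian motion $W$ whose existence is assumed. Set $g(t):=D_0(W_t)$. Splitting $W_t=W_s+(W_t-W_s)$, where $W_s\in L^2(\F_s)$ and $W_t-W_s$ is centered and independent of $\F_s$, the additivity step gives $g(t)=g(s)+D_0(W_t-W_s)$; since $W_t-W_s$ has the same law as $W_{t-s}$, distribution invariance gives $D_0(W_t-W_s)=g(t-s)$, so $g(t)=g(s)+g(t-s)$ for all $0\le s\le t$. Because $g\ge0$ and $t\mapsto W_t$ is continuous in $L^2$, so that (D4) makes $g$ continuous, Cauchy's functional equation forces $g(t)=\alpha t$ with $\alpha=D_0(W_1)>0$ by (D2). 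Since $N(0,\sigma^2)$ has the law of $W_{\sigma^2}$, distribution invariance then gives $D_0(X)=\alpha\sigma^2=\alpha\,{\rm Var}(X)$ for every centered Gaussian $X$ with variance $\sigma^2$.

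Finally I would extend from Gaussians to arbitrary $X\in L^2(\F_T)$. Writing $M_t=\E{X|\F_t}$ and applying Proposition~\ref{Propositionadditive} on a grid, $D_0(X)=\sum_i\E{D_{t_i}(M_{t_{i+1}}-M_{t_i})}$, while ${\rm Var}(X)=\sum_i\E{(M_{t_{i+1}}-M_{t_i})^2}$ by orthogonality of martingale increments; the goal is to show $\E{D_{t_i}(\Delta_i)}\to\alpha\,\E{\Delta_i^2}$ as the mesh vanishes. I expect this to be the main obstacle: the increments $\Delta_i$ are only conditionally centered and are not Gaussian, so one must combine the locality property \eqref{localproperty}, continuity (D4), and a martingale central limit argument to show that in the fine-grid limit $D$ acts on the increments like the quadratic functional $\alpha\,\E{\cdot^2}$ already determined on Gaussians. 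The absence of positive homogeneity is what makes this delicate, since increments cannot simply be rescaled; the conclusion must instead be extracted from convexity (D3), continuity, and the independence additivity above, after which summation over the grid and passage to the limit give $D_0(X)=\alpha\,{\rm Var}(X)$, and then $D_t=\alpha\,{\rm Var}_t$ for every $t$ because $D_t$ is determined by $D_0$.
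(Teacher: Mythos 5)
Your converse direction, your independence-additivity step, and your Gaussian computation are all sound; the last of these takes a slightly cleaner route than the paper (you obtain $D_0(W_t)=\alpha t$ from Cauchy's functional equation in $t$, where the paper derives the scaling law $f(\lambda x)=\lambda^2 f(x)$, but both rest on the same additivity over independent Brownian increments plus continuity (D4)). The genuine gap is the final extension step, which you correctly flag as the main obstacle but do not resolve. A martingale central limit argument will not give you control of $\E{D_{t_i}(\Delta_i)}$ for the increments $\Delta_i=M_{t_{i+1}}-M_{t_i}$ of a general martingale: nothing identifies $D_{t_i}$ on increments that are neither Gaussian nor independent of the past, and precisely because positive homogeneity is unavailable you cannot rescale small increments into a regime where a Gaussian approximation becomes quantitative. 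As written, the proof stops where essentially all the remaining work lies.

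The paper closes this gap with two ideas absent from your plan. First, rather than discretizing a general martingale, it approximates inside the Brownian filtration by simple stochastic integrals $X=\sum_i \left( (h_i I_{(t_i,t_{i+1}]})\cdot B\right)$ with $h_i=\sum_j c_j I_{A_j}$ simple: on each atom $A_j\in\F_{t_i}$ the increment is a constant multiple of a Brownian increment, so the locality property \eqref{localproperty} together with Lemma \ref{indY} and the already-established Gaussian case give $\E{D_{t_i}(h_i\Delta B_{t_{i+1}})}=\alpha\,\E{h_i^2(t_{i+1}-t_i)}$ exactly, with no limit over mesh sizes; $L^2$-continuity (D4) and the density of such simple integrands (via martingale representation) then yield $D_0=\alpha\,{\rm Var}$ on all of $L^2(\F_T^{B})$. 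Second, for $X\in L^2(\F_T)$ not measurable with respect to the Brownian filtration --- a case your martingale decomposition cannot reach at all, since the hypotheses only guarantee that the space supports \emph{a} Brownian motion, not that $\F_T=\F_T^{B}$ --- the paper invokes distribution invariance once more: setting $U=F_{B_T}(B_T)$ and $X'=q_X(U)$ gives $X'\stackrel{D}{=}X$ with $X'$ being $\F_T^{B}$-measurable, whence $D_0(X)=D_0(X')=\alpha\,{\rm Var}(X')=\alpha\,{\rm Var}(X)$. Without this quantile-coupling step your argument would at best cover payoffs adapted to the Brownian motion, not the general $L^2(\F_T)$ claimed in the theorem.
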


For the proof of Theorem \ref{distributioninvariant} we will need the following 
lemma:
\begin{lemma} \label{indY}
	Suppose that $D_t$ is a family of dynamic distribution invariant deviation 
	measures and that $Y$ is independent of $\F_t$. Then, $D_t(Y)$ is constant and 
	$$D_t(Y)=D_0(Y).$$
\end{lemma}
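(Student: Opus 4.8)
The plan is to show that $D_t(Y)$, which is a priori an $\F_t$-measurable random variable, is in fact (a.s. equal to) a deterministic constant, and then to identify that constant with $D_0(Y)$. The key structural fact I would exploit is the combination of distribution invariance with the recursive property (D5) and the local property \eqref{local2}.

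First I would establish that $D_t(Y)$ is deterministic. The natural idea is to use distribution invariance together with the independence of $Y$ from $\F_t$. For any set $A\in\F_t$, consider $I_A Y$. Since $Y$ is independent of $\F_t$, one can compare $I_A Y$ with a suitable reference random variable having the same distribution. More precisely, I would use the local property \eqref{local2}, which gives $D_t(I_A Y)=I_A D_t(Y)$, and then take expectations to obtain $\E{I_A D_t(Y)}=\E{D_t(I_A Y)}$. By (D5) with $t=0$ (recall $\F_0$ is trivial, so $D_0$ returns a constant) applied to $I_A Y$, and using $\E{I_A Y\mid\F_t}=I_A\E{Y}=I_A\E{Y\mid\F_t}$ together with translation invariance (D1), the quantity $\E{D_t(I_A Y)}=D_0(I_A Y)$ is governed by the \emph{distribution} of $I_A Y$. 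The crucial observation is that, by independence, the distribution of $I_A Y$ depends on $A$ only through its probability $P(A)$: conditioning on $A$ versus $A^c$, the value $I_A Y$ equals $Y$ on $A$ and $0$ on $A^c$, and by independence these mix according to $P(A)$ alone. Hence $\E{I_A D_t(Y)}$ is a function of $P(A)$ only.

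I would then argue that if $\E{I_A D_t(Y)}$ depends on $A$ only through $P(A)$, the random variable $D_t(Y)$ must be a.s. constant. Indeed, suppose for contradiction that $D_t(Y)$ is not a.s. constant; then there exist two disjoint sets of equal positive probability on which $D_t(Y)$ takes essentially different (larger versus smaller) average values — this is where I would use that the probability space supports a Brownian motion, hence is atomless, so one can split $\F_t$-sets into pieces of prescribed probability. Two sets $A_1,A_2\in\F_t$ of equal measure would then give $\E{I_{A_1}D_t(Y)}\ne\E{I_{A_2}D_t(Y)}$ despite $P(A_1)=P(A_2)$, contradicting dependence on $P(A)$ alone. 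Therefore $D_t(Y)$ is a.s. equal to a constant $c$.

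Finally, to identify $c=D_0(Y)$, I would take $A=\Omega$. Then by (D5) with $t=0$, $D_0(Y)=D_0(\E{Y\mid\F_t})+\E{D_t(Y)}=0+c=c$, where the first term vanishes because $\E{Y\mid\F_t}=\E{Y}$ is a constant and constants carry no deviation ($D_0(m)=0$ by (D1)). This closes the argument. The main obstacle I anticipate is the deterministic-versus-random step: carefully justifying that ``$\E{I_A D_t(Y)}$ depends only on $P(A)$'' forces $D_t(Y)$ to be constant, which requires both the atomlessness of the space (to realize arbitrary splittings of prescribed probability) and a clean use of distribution invariance applied to the truncated variables $I_A Y$; handling the interaction between distribution invariance of $D_0$ and the conditional object $D_t(Y)$ through the recursion is the delicate part.
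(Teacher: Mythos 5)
Your proposal is correct and follows essentially the same route as the paper: both arguments combine the recursion (D5) at $t=0$ with the local property $D_t(I_AY)=I_AD_t(Y)$ and independence to reduce $D_0(I_AY)$ to $D_0(I_A\E{Y})+\E{I_AD_t(Y)}$, observe via distribution invariance that this can depend on $A\in\F_t$ only through $P(A)$, and then use atomlessness (realized through the Brownian motion, exactly as the paper does with $U=F_{B_t}(B_t)$) to produce two equal-probability sets separating the values of $D_t(Y)$ and reach a contradiction; the identification $D_t(Y)=D_0(Y)$ via $A=\Omega$ is also the paper's closing step. No gaps.
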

\begin{proof}
	The case that $t=0$ is trivial. So let us assume that $t>0$. Suppose then that 
	$D_t(Y)\neq \text{constant}$. Choose sets $A, A'\in \F_t$ with $P(A)=P(A')>0$ 
	such that $D_t(Y)(\omega) >\ D_t(Y)(\omega')$ for all $\omega\in A, \omega'\in 
	A'$. That this is possible can be seen as follows. Since $D_t(Y)$ is not constant clearly there exists $c \in \mathbb{R}$ such that the set $B:= \{D_t(Y) \geq c\}$ has probability strictly between zero and one. Assume without loss of generality that $\mathbb{P}(B) \geq \frac{1}{2}$. Set $A' = B^c$ (the complement of $B$) and note that $$\mathbb{P}(A') \leq \frac{1}{2} \leq \mathbb{P}(B).$$ Define $\tilde{A}^r := B \cap \{U \leq r\}$ with $r \in [0,1]$ and $U = F_{B_t}(B_t) \sim Unif[0,1]$ where $F_{B_t}$ is the cdf of the Brownian motion $B_t$. By definition $U$ is $\F_{t}$-measurable. Clearly $r \to \mathbb{P}(\tilde{A}^r)$ is a continuous function taking all values between $[0,\mathbb{P}(B)]$. In particular, there exists $r_0$ such that $\mathbb{P}(\tilde{A}^r) = \mathbb{P}(A')$. Setting $A = \tilde{A}^{r_0}$ completes the argument. \\
	
	Next note that by independence $I_A Y \stackrel{D}{\sim} I_{A'}Y$. However,
	
	\begin{align*}
		D_0(I_A Y) &= D_0 ( \E{I_AY|\F_t}) + \E{D_t(I_A Y)} \\
		&= D_0 (I_A \E{Y|\F_t}) + \E{I_AD_t( Y)} \\
		&= D_0 (I_A \E{Y}) + \E{I_A D_t(Y)} \\
		&< D_0 (I_A \E{Y}) + \E{I_{A'} D_t(Y)}\\
		&=D_0 (I_{A'} \E{Y}) + \E{I_{A'} D_t(Y)} = D_0 (\E{I_{A'}Y|\F_t}) + 
		\E{D_t(I_{A'} Y)} = D_0(I_{A'} Y),
	\end{align*}
	which is a contradiction to the distribution invariance of $D_0$. So indeed 
	$D_t(Y)$ is constant. Finally, by the first part of the proof
	
	\begin{align*}
		D_0(Y) &= \E{D_t(Y)} + D_0(\E{Y|\F_t}) \\
		&= \E{D_t(Y)} + D_0(\E{Y}) = \E{D_t(Y)} = D_t(Y).
	\end{align*}
\end{proof}

\textit{Proof of Theorem \ref{distributioninvariant}.}
Since $D_t$ is uniquely determined by $D_0$ it is sufficient to prove the 
theorem for $t=0$. Let us first show that the theorem holds for $X$ having a 
normal distribution. Let $Z$ be a standard normally distributed random 
variable. Define $f(\sigma)=D_0(\sigma Z)$ with $\sigma\in\mathbb{R}.$ By 
assumption there exists an adapted Brownian motion, say $(B_t)_{0\leq t\leq 
	T}$. It is then for $0\leq t \leq T$

\begin{align*}
	f(\sigma\sqrt{t}) &= D_0(\sqrt{t}\sigma Z) \\
	&= D_0(\sigma B_t) \\
	&= \sum_{i=0}^{n-1} \E{D_{ti/n}(\sigma\Delta B_{t(i+1)/n})} = \sum_{i=0}^{n-1} 
	D_0(\sigma\Delta B_{t(i+1)/n}) = n D_0\left(\frac{\sqrt{t}\sigma 
		Z}{\sqrt{n}}\right) = n f\left(\frac{\sqrt{t}\sigma}{\sqrt{n}}\right),
\end{align*}
where we set $ \Delta B_{t(i+1)/n}:= B_{t(i+1)/n}- B_{ti/n}.$
It follows that 
$f\left(\frac{\sqrt{t}\sigma}{\sqrt{n}}\right)=\frac{f(\sqrt{t}\sigma)}{n}$. 
Arguing similarly as before
we also get for $k\in\mathbb{N}$ with $\frac{k}{n}\leq \frac{T}{t}$

\begin{align*}
	f\left(\sqrt{\frac{k}{n}t}\sigma\right) &=D_0(\sigma B_{kt/n})\\
	&=\sum_{i=0}^{k-1} \E{D_{ti/n}(\sigma\Delta B_{t(i+1)/n})} \\&
	= k D_0(\sigma B_{t/n}) = k D_0\left(\frac{\sigma B_t}{\sqrt{n}}\right) = \frac{k}{n} D_0(\sigma B_t) = \frac{k}{n}f(\sigma \sqrt{t}).
\end{align*}
By (D4) we have that $f$ is continuous. Therefore, for all $0\leq \lambda \leq 
\frac{T}{t}$, $f(\lambda \sigma \sqrt{t}) = \lambda^2 f(\sigma \sqrt{t})$ for 
any $\sigma\in\mathbb{R}$. Setting for arbitrary $x\in\mathbb{R}$, $\sigma = 
x/\sqrt{t}$, we get that $f(\lambda x) = \lambda^2 f(x)$ for all $0\leq \lambda 
\leq \frac{T}{t}$
with $t\in[0,T]$. Choosing $t$ arbitrary small, we may conclude that $f(\lambda 
x) = \lambda^2 f(x)$ for all $\lambda \in \mathbb{R}_+.$ Hence, if we define 
$\alpha:=f(1) > 0$ we have that

$$D_0(\sigma Z)=D_0(|\sigma| Z)=f(|\sigma |)=\sigma^2 \alpha=\alpha {\rm Var}(Z) , $$
where the first equality follows by the distribution invariance of $D_0.$
Next let us show that for simple functions of the form $X = \left( (h_i 
I_{(t_i, t_{i+1}]})\cdot B\right)_{t_i,t_{i+1}}$
with $h_i=\sum_{j=1}^{m} c_j I_{A_j}$, $c_j\in\mathbb{R}^d$, and disjoint sets 
$A_j\in \F_{t_i}$ for $j=1,\ldots,m$ we have $$D_0(X) = \alpha {\rm Var}(X).$$ 
It is
\begin{align*}
	D_0(X)&=D_0(h_{t_i}\Delta B_{t_{i+1}} ) \\
	&= \E{\sum_{j=1}^mI_{A_j}D_{t_i}(c_j \Delta B_{t_{i+1}} )} \\
	&=\alpha\, \E{\sum_{j=1}^m I_{A_j} c_j^2 (t_{i+1}-t_i) } = 
	\alpha\,\E{h_i^2(t_{i+1}-t_i)} = \alpha {\rm Var}(X),
\end{align*}
where we used Lemma \ref{indY} in the third equation to argue that $D_{t_i}(c_j 
\Delta B_{t_{i+1}} )=D_{0}(c_j \Delta B_{t_{i+1}} )=c_j^2 (t_{i+1}-t_i) $.
For $X = \left( (h_i I_{(t_i, t_{i+1}]})\cdot B\right)_{t_i,t_{i+1}}$ with 
general $h_i\in L^2_d(\F_{t_i},\mathbb{P})$ choose simple function $h_i^n$ 
converging to $h_i$ in $L^2$ and define $X^n = \left( (h_i^n I_{(t_i, 
	t_{i+1}]})\cdot B\right)_{t_i,t_{i+1}}$. Using $L^2$-continuity of $D_0$ we may 
conclude that $$ D_0(X) = \lim_n D_0(X^n) = \lim_n \alpha {\rm Var}(X^n) = 
\alpha {\rm Var}(X).$$

Next note that for simple functions of the form
$X = \sum_{i=1}^l\left(( h_i I_{(t_i, t_{i+1}]})\cdot B\right )_{t_i,t_{i+1}}$ 
for $l\in \mathbb{N}$, $h_i\in L^2_d(\F_{t_i},d\mathbb{P})$ we have

\begin{align*}
	D_0(X) &= \sum_{i=1}^l \E{D_{t_i}\Big(\left( (h_i I_{(t_i, t_{i+1}]})\cdot 
		B\right)_{t_i,t_{i+1}}\Big )}\\
	&= \sum_{i=1}^l D_{0}\bigg(\left( (h_i I_{(t_i, t_{i+1}]})\cdot 
	B\right)_{t_i,t_{i+1}} \bigg)= \sum_{i=1}^l \alpha {\rm Var}\left( ((h_i 
	I_{(t_i, t_{i+1}]})\cdot B)_{t_i,t_{i+1}} \right)= \alpha {\rm Var}(X),
\end{align*}
where we used Proposition \ref{Propositionadditive} in the first and second equation. 
Therefore, $D_0(X)=\alpha {\rm Var}(X)$ for all simple functions $X$. Using the 
$L^2$-continuity of $D_0$ and $\alpha {\rm Var}(X)$ as before, we get that 
equality actually holds for all $X\in L^2(\F_T^{B})$ with $\F_T^{B}$ being the 
completion of the $\sigma$-algebra generated by $(B_t)_{0\leq t \leq T}$.
Next, take a general $X\in L^2(\F_T)$. Define the uniform $[0,1]$ distributed 
random variable $U = F_{B_T}(B_T)$ where $F_{B_T}$ is the cdf of $B_T$. Set $X' 
= q_X(U) \stackrel{D}{=} X$. Then clearly $X'$ is $\F_T^{B}$-measurable. 
Therefore,

\begin{align*}
	D_0(X) &= D_0(X') = \alpha {\rm Var}(X') = \alpha {\rm Var}(X).
\end{align*}
This proves the theorem.
\qed

\begin{remark}
	Our proof also works for distribution invariant dynamic deviation measures $(D_t(X))_{t \in \mathbb{N}_0}$, i.e., for distribution invariant dynamic deviation measures only defined (and satisfying (D1)-(D5)) on $t \in \mathbb{N}_0$.
\end{remark}
Kupper and Schachermayer~(2009) showed that a dynamic convex risk measure is 
law-invariant if and only if there exists there exists $\gamma\in[0,\infty]$ 
such that

\be
\label{entropy}
\p_t(X)=\frac{1}{\gamma}\E{\exp(-\gamma X)|\F_t}.
\ee
The limiting case $\gamma=0$ and $\gamma=\infty$ are identified with the 
conditional expectation and the essential supremum respectively. Related 
results are also known for insurance premiums, see Gerber~(1974) and the 
references given in the introduction. 
\setcounter{equation}{0}

\section{Risk-sharing in continuous time} \label{Section Examples and Applications}

\label{sec4}
In this chapter, we assume that the probability space $(\Omega,\F,\mathbb P)$ is equipped with
(i) a standard $d$-dimensional Brownian motion
$W=(W^1,\ldots,W^d)^\intercal$ and (ii) a Poisson
random measure $N(\td t\times \td x)$ on
$[0,T]\times\mathbb{R}^k\setminus\{0\},$ independent of $W$,
with intensity measure $\hat{N}(\td t\times \td x)=\nu(\td x)\td t$, 
where the L\'{e}vy  measure $\nu(\td x)$ satisfies the integrability condition
$$ \int_{\mathbb{R}^k\setminus\{0\}} (|x|^2\wedge 1) \nu(\td x)<\infty,$$
and let $\tilde{N}(\td t \times \td x):=N(\td t \times \td x)-\hat{N}(\td t \times \td x)$ denote the compensated Poisson random measure.
Further, let $\mathcal{U}$ denote the Borel sigma-algebra induced by the $L^2(\nu(\td x))$-norm, $(\F_{t})_{t\in[0,T]}$ the right-continuous completion of the filtration
generated by $W$ and $N$, and $\mathcal{P}$ and $\mathcal O$ the
predictable and optional sigma-algebras on $[0,T]\times\Omega$ with respect to
$(\F_{t})$.  We denote by $ L_d^2(\mathcal{P},\td\mathbb P\times \td t)$ 
the space of all predictable $d$-dimensional processes that are square-integrable with respect to the measure 
$\td\mathbb P\times
\td t$ and we let $\mathcal S^2 = \left\{Y\in\mathcal O: \E{\sup_{0\leq t\leq T}|Y_t|^2} < \infty\right\}$
denote the collection of square-integrable c\`{a}dl\`{a}g optional processes. 
Further, let $\mathcal{B}(\R^k\setminus\{0\})$ be the Borel sigma-algebra on $\R^k\setminus\{0\}.$
For any $X\in L^2(\F_T)$ we denote by $(H^X,\tilde H^X)$ the unique pair of predictable processes with 
$H^X\in L_d^2(\mathcal{P},\td\mathbb P\times \td t)$ and $\tilde{H}^X
\in L^2(\mathcal{P}\times \mathcal{B}(\R^k\setminus\{0\} ),\td\mathbb P\times \td t \times \nu(\td x))$, subsequently 
referred to as the \emph{representing pair} of $X$, satisfying\footnote{See {\em e.g.}\, Theorem III.4.34 in Jacod and Shiryaev (2013)}

\begin{equation}\label{mrep}
X=\E{X}+\int_0^T H^X_s
\td W_s+\int_0^T\int_{\R^k\setminus\{0\}}\tilde{H}_{s}^X(x)\tilde{N}(\td s \times \td x) ,
\end{equation}
where $\int_0^T H^X_s \td W_s:=\sum_{i=1}^d \int_0^T H^{X,i}_s \td W^i_s$. We call a $\mathcal{P}\otimes \mathcal{B}(\mathbb{R}^d)\otimes \mathcal{U}$-measurable function $$
\begin{array}{rlclclclll}
g:&[0,T] &\times &\Omega &\times &\mathbb{R}^{d} &\times & L^{2}(\nu(\td x)) &\rightarrow &\R_+\\
&(t, & &\omega , & &h, & & \tilde{h}) &\longmapsto &g(t,\omega ,h,\tilde{h})%
\end{array}
$$
a {\em driver function} if for $\td\mathbb P\times \td t$ a.e. $(\omega,t)\in\Omega\times[0,T]$: $g$ is zero if and only if $(h,\tilde{h})=0$, and $g$ is convex and lower semi-continuous in $(h,\tilde{h})$. It is shown in Theorem 4.1 in Pistorius and Stadje~(2017), that a dynamic deviation measure satisfying (D1)-(D5) is equivalent to the existence of a driver function $g$, such that $$D_t(X) = Y_t, \quad\quad X\in L^2(\mathcal F_T),$$ 
where $(Y,Z,\tilde{Z})$ 
is the unique square integral solution of the SDE given by 
\begin{eqnarray}
\label{bsde}
\td Y_t &=& - g(t,H^X_t,\tilde{H}^X_{t})\td t+Z_t \td W_t+\int_{\mathbb R^k\backslash\{0\}} 
\tilde{Z}_t(x)\tilde{N}(\td t \times \td x),\quad t\in[0,T),\\
Y_T &=& 0.
\label{bsde2}
\end{eqnarray} 
Equivalently, we can write \begin{equation}
\label{intrep}
D_t(X) = \E{\int_t^Tg(s,H_s^X,\tilde{H_s^X}) \td s \bigg| \F_t}.
\end{equation}
Hence, any deviation measure admits an integral representation (\ref{intrep}) in terms of a function $g$. \\

Now let $\p$ be the entropic risk measure defined by (\label{entropy}) which is the only time-consistent law-invariant dynamic convex risk measure. By well known results from the backward stochastic differential equation (BSDE) literature (see for instance Barrieu and El Karoui~(2009) or Pelsser and Stadje~(2014)) (\ref{entropy}) entails in a Brownian-Poisson filtration, that there exists predictable square integrable $(Z,\tilde{Z})$ such that 
$(\p_t(X))_{0\leq t\leq T}$ satisfies $\rho_T(X) = X$ and 
$$d\p_t(X)=-\Big(\frac{1}{2\gamma}|Z_s|^2+\IR 
h(\tilde{Z}_s(x))\nu(dx)\Big)ds+Z_sdW_s+\tilde{Z}_s(x)\tilde{N}(ds,dx) ,$$
where $h(x)=\frac{1}{\gamma}(\exp\{\gamma x \}-\gamma x-1)$.
Now since the conditional variance process corresponds to a quadratic driver, 
Theorem \ref{distributioninvariant} entails that a distribution invariant deviation measure satisfies

$$dD_t(X)=-\alpha\Big(|H^X_s|^2+\IR |\tilde{H}^X_s(x)|^2 \nu(dx)\Big)ds+Z_sdW_s+\tilde{Z}_s(x)\tilde{N}(ds,dx) .$$
It is interesting to note that this means that in a case of purely Brownian 
filtration without jumps for both dynamic risk measure and for dynamic deviation 
measures distribution invariance both lead to a quadratic driver (penalty) function with the difference that for a dynamic risk measure the Malliavin derivative of the evaluation itself is squared, while for a dynamic deviation measure the Malliavin derivative of the terminal payoff is squared. However, for the Malliavin derivatives of the jump parts distribution invariance entails different kind of penalizing (namely exponential in the one and quadratic in the other case). The reason is that a Taylor approximation cannot be applied to the infinitesimal jump parts because of the discontinuities.





\begin{Example}\label{Dl} The family of $g$-deviation measures	with driver functions given by
	\begin{equation}\label{gl}
	g_{c,d}(t,h,\tilde h) = c\, |h| + d\, \sqrt{\IR |\tilde{h}(x)|^2
		\nu(\td x)}, \quad c,d\in\mathbb R_+\backslash\{0\}, 
	\end{equation}
	corresponds to a measurement of the risk of a random variable $X\in L^2(\F_T)$ by the integrated multiples 
	of the local volatilities of the continuous and discontinuous martingale parts in 
	its martingale representation~\eqref{mrep}. 
\end{Example}
\begin{Example}\label{D2}
	In the case of a $g$-deviation measure with driver function given by 
	$$
	g(\omega, t,h,\tilde{h}) = {\it CVaR}^{\nu}_{t,a}(\tilde h), \qquad a\in(0,\nu(\mathbb R^k\backslash\{0\})),
	$$
	the risk is measured in terms of the values of the (large) jump sizes under $CVaR^\nu_{t,a}$.
	Here $CVaR^{\nu}_{t,a}(\tilde h) = \frac{1}{a}\int_0^a 
	VaR^{\nu}_{t,b}(\tilde h)\td b
	$
	is given in terms of the left-quantiles $VaR^{\nu}_{t,a}(\tilde h)$, $a\in(0,\nu(\mathbb R^k\backslash\{0\}))$ 
	of $h(J)$ under the measure $\nu(\td x)$, that is,
	$$
	VaR^{\nu}_{t,a}(\tilde h) := VaR^{\nu}_{a}(h(J)) := \sup\{y\in\mathbb R: 
	\nu(\{x\in \mathbb R^k\backslash\{0\}: \tilde h(x) < -y\}) < a \}.
	$$
\end{Example}

Next, we analyze the problem of optimal risk-sharing. Suppose that we have two agents holding square integrable positions $X_A$ and 
$X_B$ and using dynamic deviation measures $D^A$ and $D^B$, respectively. Agent $A$ evaluates her risk, say $X$, by $$U^A_t(X) = \E{X|\F_t} - D_t^A(X).$$ Agent $B$ evaluates her risks similarly. Suppose the agents are allowed to set up contracts with each other specifying in every scenario $\omega \in \Omega$ a payment $Y'(\omega)$. We will refer to $Y'$ also as a payoff and assume that agent $A$ exchanges $Y'$ for a price $\pi_{Y'}$. Let us assume further that only square-integrable payoffs can be traded (or in other words exchanged). Exchanging $Y'$ for a price of $\pi_{Y'}$ agent $A$ can then reduce her risk optimally by seeking
\begin{align}\label{mini2}
	\arg\sup_{Y'\in L^2}U^A_0(X_A-(Y'-\pi_{Y'})).
\end{align}

Note that this is a generalization of a mean-variance optimization problem. For Agent $B$ to enter the transaction her utility should at least be as high as before. Therefore, Agent $A$ is under the constraint

\begin{equation}\label{constraint}
	U^B((Y'-\pi_{Y'})+X_B) \geq U^B(X_B)
\end{equation}

For the study of similar problems in the case of other risk measures, see the references given in the introduction. The following theorem gives a complete solution to the problem.
\begin{theorem}
	\label{risksharing}
	Let $g_A$ and $g_B$ be the driver function corresponding to $D^A$ and $D^B$ 
	respectively. Then the problem in (\ref{mini2}) has a solution $\tilde{Y}^*$ if and only if 
		there exist
		$$(H^*_t,\tilde{H}^{*}_t)\in \argmin_{H\in L^2_d(\td P),\tilde{H}\in L^2(\td P\times 
			\nu(\td x))}\{g_A(t,H^{X_A+X_B}_t
		-H,\tilde{H}^{X_A+X_B}_t-\tilde{H})+g_B(t,H,\tilde{H})\},$$
		such that $(H^*_s)_{0\leq s\leq T}$ is in $L^2_d(\td P\times \td s)$ and 
		$(\tilde{H}^*_s)_{0\leq s\leq T}$ is in $L^2(\td P\times \td s \times \nu(\td x))$. In 
		this case an optimal risk transfer is given by $\tilde{Y}^*= \int_{0}^{T} H^*_s \td W_s 
		+ \int_{0}^{T}\int_{\mathbb{R}^k\setminus\{0\}}\tilde{H}^*_s(x) \tilde{N}(\td s, \td x) - X_B$.

\end{theorem}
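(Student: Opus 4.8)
The plan is to reduce the constrained optimization in~(\ref{mini2}) to an unconstrained inf-convolution of $D^A_0$ and $D^B_0$, and then to exploit the integral representation~(\ref{intrep}) to turn that inf-convolution into the pointwise minimization of the driver functions appearing in the statement.

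First I would eliminate the price and the participation constraint. Writing $U^A_0(\cdot)=\E{\cdot}-D^A_0(\cdot)$ and using translation invariance (D1) together with the fact that $\pi_{Y'}$ is a constant, Agent $A$'s objective equals $\E{X_A}-\E{Y'}+\pi_{Y'}-D^A_0(X_A-Y')$, which is strictly increasing in $\pi_{Y'}$. Hence at the optimum the participation constraint~(\ref{constraint}) must bind, that is $\pi_{Y'}=\E{Y'}+D^B_0(X_B)-D^B_0(X_B+Y')$. Substituting this back, all mean terms and the price cancel, so that maximizing~(\ref{mini2}) is equivalent to minimizing $D^A_0(X_A-Y')+D^B_0(X_B+Y')$ over $Y'\in L^2(\F_T)$.

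Next I would change variables to $W:=X_B+Y'$, turning the objective into $D^A_0((X_A+X_B)-W)+D^B_0(W)$; as $W$ ranges over $L^2(\F_T)$ this is precisely the inf-convolution of $D^A_0$ and $D^B_0$ evaluated at $X_A+X_B$. Now I invoke~(\ref{intrep}): writing $(H,\tilde H)$ for the representing pair of $W$, linearity of the martingale representation~(\ref{mrep}) gives that $(X_A+X_B)-W$ has representing pair $(H^{X_A+X_B}-H,\tilde H^{X_A+X_B}-\tilde H)$, and, since constants are immaterial by (D1), the pair $(H,\tilde H)$ ranges freely over all admissible representing pairs. Therefore the objective becomes $\E{\int_0^T\bigl[g_A(s,H^{X_A+X_B}_s-H_s,\tilde H^{X_A+X_B}_s-\tilde H_s)+g_B(s,H_s,\tilde H_s)\bigr]\td s}$.

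Finally I would localize the minimization. Because the drivers depend on $(H,\tilde H)$ only through their current value and the integrand is nonnegative, the infimum of the integral over processes equals the integral of the pointwise infima, and it is attained exactly when the integrand is minimized for $\td P\times\td t$-a.e.\ $(\omega,s)$ by a process lying in the required $L^2$-spaces; this is precisely the $\argmin$ condition in the statement. For the ``if'' direction, an admissible pointwise minimizer $(H^*,\tilde H^*)$ defines $W^*$ through its stochastic integral, whence $\tilde Y^*=W^*-X_B$ (with mean dropped by (D1)) solves~(\ref{mini2}) and has the stated form; for the ``only if'' direction, any solution $\tilde Y^*$ yields $W^*=\tilde Y^*+X_B\in L^2(\F_T)$, whose representing pair minimizes the integral and hence is, a.e., a pointwise minimizer with the required integrability. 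The main obstacle is the interchange of infimum and integral: one must produce a jointly measurable selection of the pointwise minimizer and verify that square-integrability of the resulting process is equivalent to attainment of the function-space minimum. This is where convexity and lower semicontinuity of $g_A,g_B$ enter, letting one treat $(h,\tilde h)\mapsto g_A(s,H^{X_A+X_B}_s-h,\tilde H^{X_A+X_B}_s-\tilde h)+g_B(s,h,\tilde h)$ as a normal convex integrand and apply a standard measurable selection theorem.
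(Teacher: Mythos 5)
Your proposal is correct and follows essentially the same route as the paper: eliminate the price via the binding participation constraint and translation invariance, change variables to reduce to the inf-convolution $\essinf_{Y}\{D^A_0(X_A+X_B-Y)+D^B_0(Y)\}$, pass to the integral representation of the drivers, and characterize attainment by a pointwise, measurably selected minimizer in the required $L^2$-spaces. The only cosmetic difference is that the paper packages the last steps as a separate proposition (also recording that the inf-convolution is itself a dynamic deviation measure with driver $g_A\square g_B$), whereas you argue the attainment equivalence directly.
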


\textit{Proof of Theorem \ref{risksharing}.}
By translation invariance ((D1)) we obtain from (\ref{constraint}) $\pi_{Y'} = \E{Y'} - D^B_t(X_B+Y') + D^B_t(X_B)$. Again using translation invariance in (\ref{mini2}) the optimal risk allocation is given by

\begin{align}
	\arg\essinf_{Y'\in L^2}\{D^A_0(X_A-Y')+D^B_0(X_B+Y')\}&=\arg\essinf_{Y\in L^2}\{D^A_0(X_A+X_B-Y)+D^B_0(Y)\}\nonumber.
\end{align}
The second equation may be seen by redefining $Y:=Y'+ X_B$. We define further

\begin{align}\label{min}
	D_t(X_A+X_B) :=\essinf_{Y\in L^2}\{D^A_t(X_A+X_B-Y)+D^B_t(Y)\}. 
\end{align}

Theorem \ref{risksharing} follows then from the Proposition below. \qed

\begin{proposition}
	\label{risksharing2}
	Let $g_A$ and $g_B$ be the driver function corresponding to $D^A$ and $D^B$ 
	respectively. Then
	\begin{itemize}
		\item[(i)] $D_t$ defined in (\ref{min}) is a dynamic deviation measure with 
		driver function given by $$g(t,h,\tilde{h}):=(g_A\square 
		g_B)(t,h,\tilde{h}):=\essinf_{z\in\mathbb{R}^d,\tilde{z}\in 
			L^2(\nu)}\{g_A(t,h-z,\tilde{h}-\tilde{z})+g_B(t,z,\tilde{z})\}.$$
		\item[(ii)] The infinum in (\ref{min}) is attained in $Y^*$ if and only if 
		there exist
		$$(H^*_t,\tilde{H}^{*}_t)\in \argmin_{H\in L^2_d(\td P),\tilde{H}\in L^2(\td P\times 
			\nu(\td x))}\{g_A(t,H^{X_A+X_B}_t
		-H,\tilde{H}^{X_A+X_B}_t-\tilde{H})+g_B(t,H,\tilde{H})\},$$
		such that $(H^*_s)_{0\leq s\leq T}$ is in $L^2_d(\td P\times \td s)$ and 
		$(\tilde{H}^*_s)_{0\leq s\leq T}$ is in $L^2(\td P\times \td s \times \nu(\td x))$. In this case an optimal $Y$ in (\ref{min}) is given by $Y^*= \int_{0}^{T} H^*_s \td W_s 
		+ \int_{0}^{T}\int_{\mathbb{R}^k\setminus\{0\}}\tilde{H}^*_s(x) \tilde{N}(\td s, \td x)$.
	\end{itemize}
\end{proposition}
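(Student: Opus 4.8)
The plan is to reduce the functional minimization in \eqref{min} to a pointwise (in $(\omega,t)$) minimization over $\R^d\times L^2(\nu)$ by exploiting the integral representation \eqref{intrep} from Theorem 4.1 of Pistorius and Stadje~(2017). First I would observe that the martingale representation \eqref{mrep} is linear, so that the representing pair of $X_A+X_B-Y$ equals $(H^{X_A+X_B}-H^Y,\tilde H^{X_A+X_B}-\tilde H^Y)$, and that the map $Y\mapsto(H^Y,\tilde H^Y)$ is onto $L_d^2(\mathcal P,\td\mathbb P\times \td t)\times L^2(\mathcal P\times \mathcal B(\Rs),\td\mathbb P\times \td t\times \nu(\td x))$, with $D^B_t(Y)$ independent of $\E{Y}$. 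Writing $F(s,\omega,h,\tilde h):=g_A(s,H^{X_A+X_B}_s-h,\tilde H^{X_A+X_B}_s-\tilde h)+g_B(s,h,\tilde h)$, the problem \eqref{min} then becomes
$$D_t(X_A+X_B)=\essinf_{(H,\tilde H)}\E{\int_t^T F(s,H_s,\tilde H_s)\,\td s\,\Big|\,\F_t},$$
the essential infimum now ranging over square-integrable predictable pairs.

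For (i) I would first check that $g=g_A\square g_B$ is again a driver function: convexity is inherited by infimal convolution, nonnegativity is immediate, the normalization $g=0\Leftrightarrow(h,\tilde h)=0$ follows from $g_A,g_B\ge 0$ vanishing only at the origin together with the superlinear growth along rays forced by convexity, and measurability/lower semicontinuity follow from the normal-integrand structure. The representation $D_t(X_A+X_B)=\E{\int_t^T g(s,H^{X_A+X_B}_s,\tilde H^{X_A+X_B}_s)\,\td s\mid\F_t}$ is then established by two inequalities. The bound ``$\ge$'' is immediate, since for every admissible pair the integrand $F(s,H_s,\tilde H_s)$ pointwise dominates $g(s,H^{X_A+X_B}_s,\tilde H^{X_A+X_B}_s)=\essinf_{(h,\tilde h)}F(s,h,\tilde h)$, and taking conditional expectations of the time integral preserves the inequality. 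For ``$\le$'' I would, for each $\eps>0$, select a square-integrable predictable pair whose integrand lies within $\eps$ of the pointwise infimum, feed it into the representation, and let $\eps\downarrow 0$. By Theorem 4.1 this identifies $D_t$ as the dynamic deviation measure with driver $g$.

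For (ii), if a pointwise minimizer $(H^*_t,\tilde H^*_t)$ of $F(t,\cdot,\cdot)$ exists and is square-integrable as a process, then $Y^*=\int_0^T H^*_s\,\td W_s+\int_0^T\int_{\Rs}\tilde H^*_s(x)\tilde N(\td s,\td x)$ is an admissible payoff whose representing pair is exactly $(H^*,\tilde H^*)$; substituting shows the integrand equals the pointwise infimum $\td\mathbb P\times \td t$-a.e., so the infimum in \eqref{min} is attained at $Y^*$. Conversely, if the infimum is attained at some $Y^*$, then the ``$\ge$'' inequality of part (i) becomes an equality, which forces the representing pair $(H^{Y^*},\tilde H^{Y^*})$ to minimize $F(s,\cdot,\cdot)$ for $\td\mathbb P\times \td t$-a.e.\ $(s,\omega)$; this is precisely membership in the stated $\argmin$ together with square-integrability, completing the equivalence.

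The main obstacle is the interchange of the essential infimum with the (conditional) time integral used in the ``$\le$'' direction of (i), that is, the existence of a measurable, square-integrable (approximately) minimizing selection $(s,\omega)\mapsto(H_s(\omega),\tilde H_s(\omega))$. I expect to handle this via a measurable-selection and normal-integrand interchange theorem of Rockafellar type, using that $F$ is a convex normal integrand. The delicate points are verifying that the infimal convolution $g$ remains lower semicontinuous and coercive enough to be a genuine driver (so the degenerate case where $g$ fails to vanish only at the origin is excluded) and that the selected processes inherit square-integrability from the integrability of $(H^{X_A+X_B},\tilde H^{X_A+X_B})$ and of the separate minimizers of $g_A$ and $g_B$.
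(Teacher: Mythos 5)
Your proposal follows essentially the same route as the paper's proof: rewrite the inf-convolution over payoffs $Y$ as an infimum over representing pairs via the integral representation \eqref{intrep}, obtain ``$\ge$'' from pointwise domination by $g_A\square g_B$, obtain ``$\le$'' via an $\varepsilon$-optimal measurable selection, and characterize attainment in (ii) by the existence of a square-integrable pointwise minimizer. If anything you are slightly more thorough than the paper, which does not explicitly verify that $g_A\square g_B$ is itself a driver function (lower semicontinuity and vanishing only at the origin), points you correctly flag as the delicate ones.
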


\textit{Proof of Proposition \ref{risksharing2}.}
Let us start showing (i). Note that

\begin{align}
	&D_t(X_A+X_B)\nonumber\\
	&=\essinf_{Y\in L^2}\{D^A_t(X_A+X_B-Y)+D^B_t(Y)\}\nonumber\\
	&=\essinf_{H^Y\in L^2_d(dP\times ds),\tilde{H}^Y\in L^2(dP\times ds\times 
		\nu(dx))}\nonumber\\
	&\hspace{1cm}\E{\int_t^T 
		[g_A(s,H^{X_A+X_B}_s-H^Y_s,\tilde{H}^{X_A+X_B}_s-\tilde{H}^Y_s)+g_B(s,H^Y_s,\tilde{H}^Y_s)]ds
		|\F_t}\nonumber\\
	&\geq \E{\int_t^T (g_A\square g_B)(s,H^{X_A+X_B}_s,\tilde{H}^{X_A+X_B}_s)ds
		|\F_t}\label{inequ}.
\end{align}
Hence, what is left to show is `$\leq$' in (\ref{inequ}).
By a measurable selection theorem we may choose predictable processes such that

\begin{align*}
	g_A(s,H^{X_A+X_B}_s-H^\varepsilon_s,\tilde{H}^{X_A+X_B}_s-\tilde{H}^{\varepsilon}_s)+g_B(s,H^\varepsilon_s,\tilde{H}^\varepsilon_s)
	\leq
	(g_A\square g_B)(s,H^{X_A+X_B}_s,\tilde{H}^{X_A+X_B}_s)+\varepsilon .
\end{align*}
Thus,
\begin{align*}
	&\E{\int_t^T 
		[g_A(s,H^{X_A+X_B}_s-H^\varepsilon_s,\tilde{H}^{X_A+X_B}_s-\tilde{H}^{\varepsilon}_s)+g_B(s,H^\varepsilon_s,\tilde{H}^\varepsilon_s)]ds|\F_t}
	\\
	&\hspace{1cm}\leq
	\E{\int_t^T (g_A\square 
		g_B)(s,H^{X_A+X_B}_s,\tilde{H}^{X_A+X_B}_s)ds|\F_t}+\varepsilon.
\end{align*}
Choosing $\varepsilon$ arbitrary small we get `$\leq$' in (\ref{inequ}) 
completing the proof of (i).
To show (ii) note that if there exists $H^*$ in $L^2_d(dP\times ds)$ and 
$\tilde{H}^*$ in $L^2(dP\times ds \times \nu(dx))$ such that for Lebegue a.s. 
all $t$

\begin{align}
	\label{conv}
	& g_A(s,H^{X_A+X_B}_s
	-H^*_s,\tilde{H}^{X_A+X_B}_s-\tilde{H}^*_s)+g_B(t,H^*_s,\tilde{H}^*_s)\nonumber\\
	&\hspace{0.cm}=\essinf_{H\in L^2_d(dP),\tilde{H}\in L^2(dP\times \nu(dx))} 
	\{g_A(s,H^{X_A+X_B}_s
	-H,\tilde{H}^{X_A+X_B}_s-\tilde{H})+g_B(t,H,\tilde{H})\},
\end{align}
then by the first part of the proof \begin{align}
	&\E{\int_t^T 
		[g_A(s,H^{X_A+X_B}_s-H^{Y^*}_s,\tilde{H}^{X_A+X_B}_s-\tilde{H}^{Y^*}_s)+g_B(s,H^{Y^*}_s,\tilde{H}^{Y^*}_s)]ds
		|\F_t}\nonumber\\
	&=\E{\int_t^T (g_A\square g_B)(s,H^{X_A+X_B}_s,\tilde{H}^{X_A+X_B}_s)ds
		|\F_t}=D_t(X_A+X_B)\label{DAB},
\end{align}
so that the corresponding $Y^*$ solves the minimization problem. On the other 
hand, assume that no square integrable $(H^*_s,\tilde{H}^*_s)_s$ attain the 
infimum in (\ref{conv}) $dP\times dt$ a.s. Then for any $Y\in L^2$ the 
corresponding $H^Y$ and $\tilde{H}^Y$ from the martingale representation theorem
satisfy \beas
g_A(s,H^{X_A+X_B}_s
-H^Y_s,\tilde{H}^{X_A+X_B}_s-\tilde{H}^Y_s)+g_B(t,H^Y_s,\tilde{H}^Y_s)\geq 
(g_A\square g_B)(s,H^{X_A+X_B}_s,\tilde{H}^{X_A+X_B}_s),
\eeas
with a strict inequality on a nonzero predictable set. But this entails that 
the first equation in (\ref{DAB})(with $t=0$) becomes a strict inequality so that $Y$ can not be a solution to the risk sharing problem.
\qed

The next corollary shows that if $X_A  \neq -X_B + const$ it is never optimal to shift all the risks to one single party. This situation is contrary for instance to decision theories like the dual theory of Yaari (1987). However it is in line with risk-sharing under expected utility, see for instance F{\"o}llmer and Schied (2004) and Boonen (2017).

\begin{corollary} \label{coro1}
	Suppose that $X_A \neq -X_B + const$ and that one of the agents has a driver function $g$ which on a $\td P\times \td t$ non-zero set is differentiable in $(0,0).$  Then this agent after the optimal risk transfer will always keep some residual risk. In other words $\tilde{Y}^* + X_B \neq const$ for agent A and $X_A - \tilde{Y}^* \neq const$ for agent B.
\end{corollary}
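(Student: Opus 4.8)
The plan is to reduce everything to the pointwise minimization characterization in Proposition \ref{risksharing2}(ii) and then run a first-order optimality argument at the origin. Write $\Sigma := X_A + X_B$ with representing pair $(H^\Sigma,\tilde H^\Sigma)$. By Theorem \ref{risksharing} the optimal transfer $\tilde Y^*$ corresponds to the pointwise minimiser $(H^*_t,\tilde H^*_t)$ of $\phi_t(H,\tilde H) := g_A(t,H^\Sigma_t - H,\tilde H^\Sigma_t - \tilde H) + g_B(t,H,\tilde H)$. Then agent $B$'s post-transfer position $\tilde Y^* + X_B = \int_0^T H^*_s\,\td W_s + \int_0^T\int \tilde H^*_s(x)\tilde N(\td s,\td x)$ has representing pair $(H^*,\tilde H^*)$, while agent $A$'s position $X_A-\tilde Y^* = \Sigma - (\tilde Y^*+X_B)$ has representing pair $(H^\Sigma - H^*,\tilde H^\Sigma - \tilde H^*)$. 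Hence an agent keeps no residual risk precisely when its own representing pair vanishes $\td P\times\td t$-a.e., equivalently when that agent's driver is being evaluated at the origin: $(H^*,\tilde H^*)=0$ for $B$, or $(H^*,\tilde H^*)=(H^\Sigma,\tilde H^\Sigma)$ for $A$.

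The key structural fact I would isolate is that, since any driver satisfies $g\ge 0$ with $g=0$ iff $(h,\tilde h)=0$, the origin is a global minimiser of $g(t,\omega,\cdot,\cdot)$; therefore, if $g$ is (G\^ateaux) differentiable at $(0,0)$ on a non-null set $E$, then $\nabla g(t,\omega,0,0)=0$ there. Arguing by contradiction, suppose the agent with the differentiable driver — say agent $B$ — keeps no residual risk, so $(0,0)$ is the pointwise minimiser of $\phi_t$ for a.e.\ $(\omega,t)$. On $E$ the optimality condition $0\in\partial\phi_t(0,0)$, combined with the subdifferential sum and chain rules, gives
$$0\in -\,\partial g_A(t,H^\Sigma_t,\tilde H^\Sigma_t) + \nabla g_B(t,0,0) = -\,\partial g_A(t,H^\Sigma_t,\tilde H^\Sigma_t),$$
so $(H^\Sigma_t,\tilde H^\Sigma_t)$ minimises $g_A(t,\cdot,\cdot)$; as $g_A$ attains its minimum only at the origin, this forces $(H^\Sigma_t,\tilde H^\Sigma_t)=0$ on $E$. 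The case where agent $A$ carries the differentiable driver is symmetric: one uses $(H^*,\tilde H^*)=(H^\Sigma,\tilde H^\Sigma)$, and the first-order condition then reads $0\in \partial g_B(t,H^\Sigma_t,\tilde H^\Sigma_t)$, again forcing $(H^\Sigma_t,\tilde H^\Sigma_t)=0$ on the differentiability set.

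To close the argument I would invoke the standing hypothesis $X_A\neq -X_B+const$, which says exactly that $(H^\Sigma,\tilde H^\Sigma)\neq 0$ on a $\td P\times\td t$ non-null set $G$. Working on $E\cap G$ then produces the contradiction $(H^\Sigma,\tilde H^\Sigma)=0$ and $(H^\Sigma,\tilde H^\Sigma)\neq 0$ simultaneously, so the agent with the differentiable driver must retain a non-zero representing pair, i.e.\ a non-constant position, on a non-null set — that is, residual risk. I expect the main obstacle to be the measure-theoretic bookkeeping: the first-order argument is genuinely pointwise and only bites where the driver is differentiable at the origin \emph{and} $\Sigma$ actually carries risk, so one must ensure $E\cap G$ is non-null (which is automatic when differentiability holds $\td P\times\td t$-a.e., since then $E\cap G=G$). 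A secondary technical point is justifying the convex subdifferential calculus in the infinite-dimensional jump variable $\tilde h\in L^2(\nu)$; here I would appeal to the Moreau--Rockafellar sum rule, applicable because $g_A,g_B$ are finite convex functions and $g_B$ is continuous (indeed differentiable) at the point in question, so the sum rule holds with equality.
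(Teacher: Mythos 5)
Your argument is correct and is essentially the paper's own proof: both reduce to the pointwise first-order optimality condition for the inf-convolution (the paper cites Corollary 2.4.7 of Zalinescu (2002) for $\partial g_A(\cdot)\cap\partial g_B(\cdot)\neq\emptyset$, which is exactly your condition $0\in-\partial g_A+\nabla g_B$ after the sum and chain rules), and both exploit that differentiability at the global minimum $(0,0)$ forces the gradient there to vanish while $0\in\partial g(x)$ can hold only at $x=0$. Your explicit caveat that the differentiability set must intersect non-trivially the set where $(H^{X_A+X_B},\tilde H^{X_A+X_B})\neq 0$ is in fact more careful than the paper, which passes over this point silently.
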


\begin{proof}
By Theorem \ref{risksharing} we have for the optimal representing pair $$(H^*_t,\tilde{H}^{*}_t)\in \argmin_{H\in L^2_d(dP),\tilde{H}\in L^2(dP\times \nu(dx))}\{g_A(t,H^{X_A+X_B}-H,\tilde{H}^{X_A+X_B}-\tilde{H})+g_B(t,H,\tilde{H})\}.
$$ Hence, $(H^*_t,\tilde{H}^{*}_t)$ by Corollary 2.4.7 in Zalinescu (2002) must satisfy 
\begin{equation} \label{gradient}
\partial g_A(t,H^{X_A+X_B}_t-H^*_t,\tilde{H}_t^{X_A+X_B}-\tilde{H}_t^*)\cap \partial g_B(t,H^*_t,\tilde{H}^*_t) \neq \emptyset,
\end{equation}
where $\partial g_A$ and $\partial g_B$ are the subgradients\footnote{$z$ is in the subgradient of $f(x)$ where $f$ is a convex function if $f(y) \geq f(x)+z(y-x)$ for all $y$.} (i.e., the generalized derivatives) of $g_A(t,z,\tilde{z})$ and $g_B(t,z,\tilde{z})$ with respect to $(z,\tilde{z})$. Noticing that every driver function is zero in zero and otherwise strictly positive, both drivers have their unique minimum in zero. Hence, the subgradients of $g_A$ and $g_B$ do not contain zero at non-zero points. Furthermore, both subgradients contain zero at $(0,0)$. 
From this it follows that $(H^*_t,\tilde{H}^{*}_t)$ can not be identical zero if $\partial g_B(t,0,0) =\{0\}$ on a non-zero set or equal to $(H^{X_A + X_B}, \tilde{H}^{X_A+X_B})$ if $\partial g_A(t,0,0) =\{0\}$ on a non-zero set. The reason is that by assumption $X_A + X_B \neq const$ and therefore $(H^{X_A + X_B}, \tilde{H}^{X_A+X_B}) \neq (0,0)$. 	
\end{proof}

The next corollary describes the situation where both driver functions are the same up to one parameter which is often interpreted as reflecting the degree of risk tolerance. This result gives a closed formula for the optimal risk exchange, $\tilde{Y}^*$. Similar structures of risk-sharing are shown in Borch (1962) (for expected utility instead of dynamic deviation measures). See also Barrieu and El Karoui (2004,2005,2009) for convex risk measures.

\begin{corollary} \label{coro2}
	Suppose that for certain $\gamma_A,\gamma_B >0$ we have $g_A(t,h,\tilde{h})=\gamma_A g(t,h/\gamma_A,\tilde{h}/\gamma_A)$ 
	and $g_B(t,h,\tilde{h})=\gamma_B g(t,h/\gamma_B,\tilde{h}/\gamma_B)$ for all $(h,\tilde{h})$. Then 
	$\tilde{Y}^*=\frac{\gamma_B}{\gamma_A+\gamma_B}X_A-\frac{\gamma_A}{\gamma_A+\gamma_B}X_B$.
\end{corollary}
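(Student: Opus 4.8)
The plan is to invoke Theorem \ref{risksharing} directly and reduce everything to computing the pointwise minimizer of the inf-convolution objective under the dilation hypothesis $g_A(t,h,\tilde h)=\gamma_A g(t,h/\gamma_A,\tilde h/\gamma_A)$ and $g_B(t,h,\tilde h)=\gamma_B g(t,h/\gamma_B,\tilde h/\gamma_B)$. Once the minimizer $(H^*_t,\tilde H^*_t)$ is known, the optimal transfer is read off from the closed-form expression $\tilde Y^*=\int_0^T H^*_s\,\td W_s+\int_0^T\IR\tilde H^*_s(x)\tilde N(\td s,\td x)-X_B$ supplied by the theorem.

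First I would fix $(\omega,t)$ and abbreviate $(a,\tilde a):=(H^{X_A+X_B}_t,\tilde H^{X_A+X_B}_t)$, so that the inner problem becomes the minimization of $\gamma_A g(t,(a-H)/\gamma_A,(\tilde a-\tilde H)/\gamma_A)+\gamma_B g(t,H/\gamma_B,\tilde H/\gamma_B)$ over $(H,\tilde H)$. Substituting $p=(a-H)/\gamma_A$, $q=H/\gamma_B$ (and analogously $\tilde p,\tilde q$) converts this into minimizing $\gamma_A g(t,p,\tilde p)+\gamma_B g(t,q,\tilde q)$ subject to the linear constraints $\gamma_A p+\gamma_B q=a$ and $\gamma_A\tilde p+\gamma_B\tilde q=\tilde a$. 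Setting $\lambda=\gamma_A/(\gamma_A+\gamma_B)$ and using convexity of the driver $g$ in $(h,\tilde h)$ yields $\lambda g(t,p,\tilde p)+(1-\lambda)g(t,q,\tilde q)\ge g(t,\lambda p+(1-\lambda)q,\lambda\tilde p+(1-\lambda)\tilde q)$, and the constraints force the convex combination to equal $(a/(\gamma_A+\gamma_B),\tilde a/(\gamma_A+\gamma_B))$. Multiplying by $\gamma_A+\gamma_B$ produces a lower bound attained exactly when $p=q=a/(\gamma_A+\gamma_B)$, which unwinds to $H^*_t=\frac{\gamma_B}{\gamma_A+\gamma_B}H^{X_A+X_B}_t$ and $\tilde H^*_t=\frac{\gamma_B}{\gamma_A+\gamma_B}\tilde H^{X_A+X_B}_t$. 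Being constant multiples of the representing pair of $X_A+X_B\in L^2(\F_T)$, these automatically lie in $L^2_d(\td P\times\td s)$ and $L^2(\td P\times\td s\times\nu(\td x))$, so the hypotheses of Theorem \ref{risksharing} are met.

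Finally I would substitute into the formula for $\tilde Y^*$. By the martingale representation \eqref{mrep} one has $\int_0^T H^{X_A+X_B}_s\,\td W_s+\int_0^T\IR\tilde H^{X_A+X_B}_s(x)\tilde N(\td s,\td x)=(X_A+X_B)-\E{X_A+X_B}$, whence $\tilde Y^*=\frac{\gamma_B}{\gamma_A+\gamma_B}(X_A+X_B)-X_B-\frac{\gamma_B}{\gamma_A+\gamma_B}\E{X_A+X_B}$. Collecting the $X_B$ terms gives $\frac{\gamma_B}{\gamma_A+\gamma_B}X_A-\frac{\gamma_A}{\gamma_A+\gamma_B}X_B$ plus the additive constant $-\frac{\gamma_B}{\gamma_A+\gamma_B}\E{X_A+X_B}$. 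Because deviation measures are translation invariant by (D1), the objective in \eqref{mini2} is unchanged under shifting $Y'$ by a constant, so this constant may be dropped, yielding the asserted optimal transfer.

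The only substantive step is the Jensen/convexity argument pinning down the minimizer at the ``equal arguments'' point; everything else is bookkeeping. The one subtlety I would flag is that the risk-sharing problem determines $\tilde Y^*$ only up to an additive constant, so the corollary's formula should be read as one representative of the optimal class — which is precisely what translation invariance licenses.
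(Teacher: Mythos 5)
Your proof is correct, but it takes a genuinely different route from the paper's. The paper leans on the machinery it set up for Corollary \ref{coro1}: it invokes the subgradient optimality condition \eqref{gradient} (which rests on Corollary 2.4.7 in Zalinescu (2002) for the inf-convolution), observes that under the dilation hypothesis $\partial g_A(t,h,\tilde h)=\partial g\left(t,h/\gamma_A,\tilde h/\gamma_A\right)$ and likewise for $g_B$, and then checks that the proportional choice $(H^*,\tilde H^*)=\frac{\gamma_B}{\gamma_A+\gamma_B}(H^{X_A+X_B},\tilde H^{X_A+X_B})$ makes both subgradients get evaluated at the common point $\frac{1}{\gamma_A+\gamma_B}(H^{X_A+X_B},\tilde H^{X_A+X_B})$, so the intersection is trivially nonempty. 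You instead bypass subgradients entirely: your change of variables turns the pointwise objective into $\gamma_A g(t,p,\tilde p)+\gamma_B g(t,q,\tilde q)$ under the affine constraint $\gamma_A p+\gamma_B q=a$, and Jensen's inequality gives the lower bound $(\gamma_A+\gamma_B)\,g\!\left(t,\frac{a}{\gamma_A+\gamma_B},\frac{\tilde a}{\gamma_A+\gamma_B}\right)$, attained at $p=q$. This is more elementary (no convex-duality citation needed) and it certifies \emph{global} optimality directly, whereas the paper only verifies a first-order condition; you also make explicit the square-integrability check and the additive-constant discrepancy between the theorem's formula for $\tilde Y^*$ and the corollary's stated formula, both of which the paper passes over in silence and which genuinely need the translation-invariance remark you give. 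The only nitpick: your phrase ``attained exactly when $p=q$'' would require strict convexity of $g$, which is not assumed; only the ``attained at'' direction is needed (and, consistently with this, uniqueness of the optimal transfer is not actually claimed by your argument, nor rigorously by the paper's).
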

\setcounter{equation}{0}

\begin{proof}
By the last proof $(H^*, \tilde{H}^*)$ is uniquely characterized by (\ref{gradient}) which becomes
$$
\partial g\bigg(t,\frac{H^{X_A+X_B}_t-H^*_t}{\gamma_A},\frac{\tilde{H}_t^{X_A+X_B}-\tilde{H}_t^*}{\gamma_A}\bigg)\cap \partial g\Big(t,\frac{H^*_t}{\gamma_B},\frac{\tilde{H}^*_t}{\gamma_B}\Big) \neq \emptyset.
$$
Clearly, this holds if we choose $(H^{*}, \tilde{H}^{*})$ to be
$$(H^{\frac{\gamma_B}{\gamma_A+\gamma_B}(X_A + X_B)}, \tilde{H}^{\frac{\gamma_B}{\gamma_A+\gamma_B}(X_A + X_B)})=\bigg(\frac{\gamma_B}{\gamma_A+\gamma_B} H^{X_A + X_B}, \frac{\gamma_B}{\gamma_A+\gamma_B}\tilde{H}^{X_A+X_B}\bigg).$$ This proves the corollary.
\end{proof}

{\footnotesize 


\begin{thebibliography}{99}

\baselineskip 0.374 cm

\bibitem{A2007} 
{\sc Acciaio, B.} (2007). 
Optimal risk sharing with non-monotone monetary functionals. {\it Finance and Stochastics} 11, 267-289.

\bibitem{ADEH99}
{\sc Artzner, Ph., F. Delbaen, J.-M. Eber and D. Heath} (1999).
Coherent measures of risk. {\it Mathematical Finance} 9, 203-228.

\bibitem{ADEHK2004}
{\sc Artzner, Ph., F. Delbaen, J.-M. Eber, D. Heath and H. Ku} (2004).
Coherent multiperiod risk adjusted values and Bellman’s principle. {\it Annals of Operations Research} 152, 5-22.

\bibitem{BK2004}
{\sc Barrieu, P. and N. El Karoui} (2004). 
Optimal derivatives design under dynamic risk measures. {\it Article in Mathematics of Finance, Contemporary Mathematics (A.M.S. Proceedings)} 13-26.

\bibitem{BK2005}
{\sc Barrieu, P. and N. El Karoui} (2005). 
Inf-convolution of risk measures and optimal risk transfer. {\it Finance and Stochastics} 9, 269-298.

\bibitem{BK2009}
{\sc Barrieu, P. and N. El Karoui} (2009). 
Pricing, Hedging and Optimally Designing Derivatives via Minimization of Risk Measures. {\it Indifference Pricing: Theory and Applications} (ed: RenÂ´eCarmona), Princeton University Press.

\bibitem{BC2010} {\sc Basak, S. and G. Chabakauri} (2010). 
Dynamic mean-variance asset allocation. {\it Review of Financial Studies} 23, 2970-3016.

\bibitem{B08} 
{\sc Bion-Nadal, J.} (2008). 
Dynamic risk measures: time consistency and risk measures from BMO martingales. {\it Finance and Stochastics} 12, 219-244.

\bibitem{B2017} {\sc Boonen, T. J.} (2017). 
Risk sharing with expected and dual utilities. {\it ASTIN Bulletin} 47, 391-415.

\bibitem{B62} {\sc Borch, K.} (1962). 
Equilibrium in a reinsurance market. {\it Econometrica} 30, 424-444.

\bibitem{CDG2012} {\sc Carlier, G., R.-A. Dana and A. Galichon} (2012). 
Pareto efficiency for the concave order and multivariate comonotonicity. {\it Journal of Economic Theory} 147, 207-229.

\bibitem{C06} 
{\sc Cheridito, P., F. Delbaen and M. Kupper} (2006). 
Dynamic monetary risk measures for bounded discrete-time processes. {\it Electronic Journal of Probability,} 11, 57-106.

\bibitem{C09} 
{\sc Cochrane, J.H.} (2009). \textit{Asset pricing: Revised edition.} {\it Princeton university press.} 

\bibitem{C2013} {\sc Czichowsky, C.} (2013). 
Time-consistent mean-variance portfolio selection in discrete and continuous time. {\it Finance and Stochastics} 17, 227-271.

\bibitem{DL2010} {\sc Dana, R. A. and C. Le Van} (2010). Overlapping sets of priors and the existence of efficient allocations and equilibria for risk measures. {\it Mathematical Finance} 20, 327-339.

\bibitem{D2006}
{\sc Delbaen, F.} (2006).
The structure of m-stable sets and in particular of the set of risk neutral measures.
{\it  Séminaire de Probabilit\`es XXXIX} 1874, 215-258.

\bibitem{DPR10}
{\sc Delbaen, F., S. Peng and E. Rosazza Gianin} (2010).
Representation of the penalty term of dynamic concave utilities.
{\it Finance and Stochastics} 14, 449-472.

\bibitem{DS05}
{\sc Detlefsen, K. and G. Scandolo} (2005).
Conditional and dynamic convex risk measures.
{\it Finance and Stochastics} 9, 539-561.

\bibitem{ESC2015} \textsc{Elliott, R.J., T.K. Siu and S.N. Cohen} (2015). Backward stochastic difference equations for dynamic convex risk measures on a binomial tree. {\it Journal of Applied Probability} 52, 771-785.

\bibitem{ER09}
 {\sc El Karoui, N. and C. Ravanelli} (2009). Cash subadditive risk measures 
 and interest rate ambiguity. {\it Mathematical Finance} 19, 561-590.

\bibitem{ELW18} {\sc Embrechts, P., H. Liu and R. Wang} (2018). Quantile-based risk sharing. {\it Operations Research} 66, 936-949.

\bibitem{FS2008} {\sc Filipovic, D. and G. Svindland} (2008). Optimal capital and risk allocations for law- and cash-invariant convex functions. {\it Finance and Stochastics} 12, 423-439.

\bibitem{FS02}
{\sc F{\"o}llmer, H. and A. Schied} (2002).
Convex measures of risk and trading constraints. {\it Finance and Stochastics}
6, 429-447.

\bibitem{FS04}
{\sc F\"ollmer, H. and A. Schied} (2004). {\it Stochastic Finance.
} 2nd ed., De Gruyter, Berlin.

\bibitem{FG02} {\sc Frittelli, M. and E. Rosazza Gianin}
(2002). Putting order in risk measures. {\it Journal of Banking \& Finance} 26, 1473-1486.

\bibitem{G1974} {\sc Gerber, H.U.} (1974). On iterative Premium Calculation Principles. {\it Sonderabdruck aus den Mitteilungen der Vereinigung schweizerischer Versicherungsmathematiker} 74, 2.

\bibitem{GV1979}
{\sc Goovaerts, M. J., and F. De Vylder} (1979). A note on iterative premium calculation principles. {\it Astin Bulletin} 10, 325-329.

\bibitem{GMZ2009}
{\sc Grechuk, B., A. Molyboha, and M. Zabarankin} (2009). Maximum entropy 
principle with general deviation measures. {\it Mathematics of Operations 
Research} 34, 445-467.

\bibitem{HK2004} {\sc Heath, D. and H. Ku} (2004). Pareto equilibria with coherent measures of risk. {\it Mathematical Finance} 14, 163-172.

\bibitem{JS2013} {\sc Jacod, J. and A. Shiryaev} (2013). Limit theorems for stochastic processes (Vol. 288). {\it Springer Science and Business Media}.

\bibitem{J08} 
{\sc Jiang, L.} (2008). Convexity, translation invariance and subadditivity for g-expectations and related risk measures. {\it The Annals of Applied Probability} 245-258.

\bibitem{JST2008} {\sc Jouini, E., W. Schachermayer and N. Touzi} (2008). Optimal risk sharing for law invariant monetary utility functions. {\it Mathematical Finance} 18, 269-292.

\bibitem{}
{\sc Kaluszka, M. and M. Krzeszowiec} (2013). On iterative premium calculation principles under Cumulative Prospect Theory. {\it Insurance: Mathematics and Economics} 52, 435-440.

\bibitem{KS09} 
{\sc Kupper, M. and W. Schachermayer} (2009). Representation results for law invariant time consistent functions. {\it Mathematics and Financial Economics} 2, 189-210.

\bibitem{KS07}
{\sc Kl\"oppel, S. and M. Schweizer} (2007). Dynamic indifference
valuation via convex risk measures. {\it Mathematical Finance} 17, 599-627.


\bibitem{LZL2012} {\sc Li, Z., Y. Zeng and Y. Lai} (2012). Optimal time-consistent investment and reinsurance strategies for insurers under Hestons SV model. {\it Insurance: Mathematics and Economics} 51, 191-203.

\bibitem{MR2015} {\sc Mastrogiacomo, E. and E. Rosazza Gianin} (2015). Pareto optimal allocations and optimal risk sharing for quasi-convex risk measures. {\it Mathematics and Financial Economics} 9, 149-167.

\bibitem{MS2005}
{\sc M\"arkert, A., and R. Schultz} (2005). On deviation measures in stochastic integer programming. {\it Operations Research Letters} 33, 441-449.

\bibitem{PS14}
{\sc Pelsser, A. and M. Stadje} (2014). Time-consistent and market-consistent evaluation. {\it Mathematical Finance} 24, 25-62.

\bibitem{PS2017}
{\sc Pistorius, M. and M. Stadje} (2017). On dynamic deviation measures and continuous-time portfolio optimization. {\it The Annals of Applied Probability}  27, 3342-3384.

\bibitem{R2004}
{\sc Riedel, F.} (2004). Dynamic coherent risk measures. {\it Stochastic Processes and their Applications} 112, 185-200.

\bibitem{R2017} {\sc Righi, M. B.} (2017). Closed spaces induced by deviation measures. {\sc Economics bulletin. Nashville.} 37, 5 p.

\bibitem{RC2016} {\sc Righi, M. and P. Ceretta} (2016). Shortfall Deviation Risk: an alternative to risk measurement. {\it Journal of Risk} 19, 81-116.

\bibitem{RUZ2002}
{\sc Rockafellar, R.T., S.P. Uryasev and M. Zabarankin} (2002). Deviation 
measures in risk analysis and optimization. {\it Preprint.}

\bibitem{RUZ2006a}
{\sc Rockafellar, R.T., S.P. Uryasev, and M. Zabarankin} (2006a). Master funds in portfolio analysis with general deviation measures. {\it Journal of Banking \& Finance} 30, 743-778.

\bibitem{RUZ2006b}
{\sc Rockafellar, R.T., S.P. Uryasev, and M. Zabarankin} (2006b). Optimality conditions in portfolio analysis with general deviation measures. {\it Mathematical Programming} 108, 515-540.

\bibitem{RUZ2006c}
{\sc Rockafellar, R.T., S.P. Uryasev, and M. Zabarankin} (2006c). Generalized deviations in risk analysis. {\it Finance and Stochastics} 10, 51-74.

\bibitem{RUZ2007}
{\sc Rockafellar, R.T., S.P. Uryasev, and M. Zabarankin} (2007). Equilibrium with investors using a diversity of deviation measures. {\it Journal of Banking \& Finance} 31, 3251-3268.

\bibitem{RUZ2008} {\sc Rockafellar, R.T., S. Uryasev and M. Zabarankin} (2008). Risk tuning with generalized linear regression. {\it Mathematics of Operations Research} 33, 712-729.

\bibitem{SS11}
{\sc Sircar, R. and S. Sturm} (2015). From smile asymptotics to market risk measures. {\it Mathematical Finance} 25, 400-425.

\bibitem{T2009} {\sc Tsanakas, A.} (2009). To split or not to split: Capital allocation with convex risk measures. {\it Insurance: Mathematics and Economics} 44, 268-277.

\bibitem{WF2011} {\sc Wang, J., and P.A. Forsyth} (2011). Continuous time mean variance asset allocation: A time-consistent strategy. {\it European Journal of Operational Research} 209, 184-201.

\bibitem{W2017} {\sc Weber, S.} (2017). Solvency II, or how to sweep the downside risk under the carpet. arXiv: 1702.08901

\bibitem{Y1987} {\sc Yaari, M.E.} (1987). The dual theory of choice under risk. {\it Econometrica:  Journal of the Econometric Society}, 95-115.

\bibitem{Z2002} 
{\sc Zalinescu, C.} (2002). Convex analysis in general vector spaces. {\it World scientific}.



\end{thebibliography}
\end{document}